\newcommand{\p}[1]{\mathbb{P}\left[{#1}\right]}
\theoremstyle{plain}
\newtheorem{theorem}{Theorem}[section]
\newtheorem{lemma}[theorem]{Lemma}
\theoremstyle{definition}
\newtheorem*{definition}{Definition}
\theoremstyle{remark}
\newtheorem*{remark}{Remark}
\theoremstyle{definition}
\newtheorem{model}{Model}[section]
\newcommand{\N}{{\mathbb{N}}}
\newcommand{\Nz}{{\mathbb{N}_0}}
\newcommand{\geo}{\operatorname{Geo}}
\newcommand{\bin}{\operatorname{Bin}}
\newcommand{\depth}{\operatorname{depth}}
\renewcommand{\root}{\aleph}
\newcommand{\outdeg}{\operatorname{out-deg}}
\newcommand{\indeg}{\operatorname{in-deg}}
\title{
Justifying the small-world phenomenon \\
via
random recursive trees}
\author{Abbas Mehrabian\thanks{Supported by the Vanier Canada Graduate Scholarships program.
Most of this work was done while the author was visiting Monash University, Australia.}\\
{\small Department of Combinatorics and Optimization}
\\
{\small University of Waterloo}
\\
{\small \texttt{amehrabi@uwaterloo.ca}}
}
\date{}
\begin{document}

\maketitle

\begin{abstract}
We present a new technique for proving logarithmic upper bounds for diameters of evolving random graph models,
which is based on defining a coupling between random graphs and variants of random recursive trees.
The advantage of the technique is three-fold: it is quite simple and provides short proofs,
it is applicable to a broad variety of models including those incorporating preferential attachment,
and it provides bounds with small constants.
We illustrate this by proving, for the first time, logarithmic upper bounds for the diameters of the following well known models:
the forest fire model, the copying model, the PageRank-based selection model, the Aiello-Chung-Lu models, the generalized linear preference model, directed scale-free graphs, the Cooper-Frieze model,
and random unordered increasing $k$-trees.
Our results shed light on why the small-world phenomenon is observed in so many real-world graphs.
\end{abstract}

\section{Introduction}

`Small-world phenomenon' refers to a striking pattern observed in numerous
real-world graphs: most pairs of vertices are connected by a path whose length is considerably smaller than the size of the graph.
Travers and Milgram~\cite{milgram} in 1969 conducted an experiment in which participants were asked to reach a target person by sending a chain letter. 
The average length of all completed chains
was found to be 6.2, an amazingly small number, hence the phrase `six degrees of separation.'
The Webgraph is a directed graph whose vertices are the static web pages, and there is an edge joining two vertices if there is a hyperlink in the first page pointing to the second page.
Broder, Kumar, Maghoul, Raghavan, Rajagopalan, Stata, Tomkins, and Wiener~\cite{Broderetal} in 1999
crawled about 200 million web pages 
and found that the expected shortest-directed-path distance between two random web pages (when a path exists at all) is 16.18; this figure is 6.83 in the corresponding underlying undirected graph.

Backstrom, Boldi, Rosa, Ugander, and Vigna~\cite{four_degrees_separation} studied the Facebook graph in May 2011, which had about 721 million vertices.
The vertices of this graph are people, and two of them are joined by an edge if they are friends on Facebook.
The diameter of the giant component of this graph was found to be 41, and the average distance between reachable pairs was found to be around 4.74.
For other examples, see, e.g., Tables~1 and~2 in~\cite{tables}, Table~8.1 in~\cite{newman_book} or Table~4 in~\cite{google+}.

Another fascinating observation on many real-world graphs is that their degree sequences are heavy-tailed and almost obey a power law.
As Erd\H{o}s-R\'{e}nyi random graphs do not satisfy this property, in recent years a great deal of research has been built around defining new probabilistic models, aiming at capturing the aforementioned and other properties of real-world graphs (see, e.g., Bonato~\cite[Chapter~4]{Bonato} or Chakrabarti and Faloutsos~\cite[Part~II]{chakrabarti_book}).  
Lots of models have been defined so far, yet very few rigorously analysed.


The \emph{diameter} of an undirected graph is the maximum shortest-path distance between any two vertices.
It is a well known metric quantifying how `small-world' the graph is; informally speaking, it measures how quickly one can get from one `end' of the graph to the other.
The diameter is related to various processes, e.g.\ it is within a constant factor of the memory complexity of the depth-first search algorithm.
Also, it is a natural lower bound for the mixing time of any random walk (\cite[Section~7.1]{mixing_times}) and
the broadcast time of the graph (\cite[Section~3]{broadcast_times}).
Another well studied metric is the \emph{average diameter} of a graph, which is the expected value of the shortest-path distance between two random vertices.
Despite the fact that these are two of the most studied parameters of a network, 
for several models introduced in the literature  the degree sequence has been proved to be power-law, but no sublinear upper bound for the diameter or average diameter is known.

We fill in this gap by presenting a new technique for establishing upper bounds for diameters of certain random graph models,
and demonstrate it by proving logarithmic upper bounds for the diameters of a variety of models, including the following well known ones:
the forest fire model~\cite{forest_fire_journal},
the copying model~\cite{copying_model_def},
the PageRank-based selection model~\cite{pagerank_model_journal},
the Aiello-Chung-Lu models~\cite{acl_model},
the generalized linear preference model~\cite{bu_towsley},
directed scale-free graphs~\cite{directed_scale_free},
the Cooper-Frieze model~\cite{cooper-frieze-model},
and random unordered increasing $k$-trees~\cite{randomktree_def}.
This means that in each of these models, 
for \emph{every} pair $(u,v)$ of vertices there exists a very short $(u,v)$-path, a path connecting $u$ and $v$ whose length is logarithmic in the number of vertices.
This implies, in particular, that the average diameters of these models are logarithmic.
We also prove polylogarithmic upper bounds for the diameter of the preferential attachment model with random initial degrees~\cite{randominitial}
in the case that the initial degrees' distribution has an exponential decay.
Prior to this work no sublinear upper bound was known even for the average diameter of any of these models.
(This claim can quickly be verified by looking at Table~8.2 from the recent monograph~\cite{chakrabarti_book},
or~\cite[Table~III]{Chakrabarti},
or the table in~\cite[p.\ 162]{web_survey}: each cited table contains a summary of known results on the diameter and other properties of several real-world network models.)

This is the first paper that proves logarithmic upper bounds for such a wide range of random graph models. 
Our results shed light on why the small-world phenomenon is observed in so many real-world graphs.
At their core, our arguments are based on the fact that in all  models considered here, there is a sort of `rough uniformity' for the (random) destination of each new link.
Thus, we may expect that for any growing network in which
the endpoints of new links are chosen according to a probability distribution that is `not too biased,' i.e.\ does not greatly favour some vertices over others, the diameter grows at most logarithmically.
We believe this is the primary reason that most real-world graphs are small-world.

From a wider perspective, it would be appealing to have a mathematical theory for characterizing those evolving random graphs which have logarithmic diameters. 
This paper is a fundamental step in building this theory.
The technique developed here gives unified simple proofs for known results, provides lots of new ones,
and will help in proving many of the forthcoming network models are small-world. 
We hope this theory will be developed further to cover other network models, e.g.\ spatial models~\cite{spatial_survey}, as well.


\subsection{Our technique and outline of the paper}

We study \emph{evolving models} (also called \emph{on-line} or \emph{dynamic} models), i.e.\ the graph changes over time according to pre-defined probabilistic rules, 
and we are interested in the long-term structure of this evolving graph.
We assume that in discrete time-steps new vertices and edges appear in the graph, but no deletion occurs.
The goal is to show that the evolving graph at time $n$ has diameter $O(\log n)$ \emph{asymptotically almost surely (a.a.s.)}, that is, with probability tending to 1 as $n$ goes to infinity.
In all models considered here, the Chernoff bound implies that the 
number of vertices at time $n$ is $\Omega(n)$ a.a.s., hence we will conclude that a.a.s.\ the evolving graph has diameter $O(\log n)$ when it has $n$ vertices.


Let us informally explain our technique.
In this section when we write a certain graph/tree has a logarithmic diameter/height, we mean its diameter/height has a logarithmic upper bound.
An important object in this paper is a \emph{random recursive tree}, defined as follows:
there exists a single node at time 0, and in every time-step $t=1,2,\dots$, a new node is born and is joined to a uniformly at random (\emph{u.a.r.}) node of the current tree.
It is known that when this tree has $n$ nodes, a.a.s.\ its height is $\Theta(\log n)$~\cite{random_recursive_trees}.
The technique consists of two main steps:
first, we build a \emph{coupling} between our evolving random graph and some variant of a random recursive tree
in such a way that the diameter of the graph is dominated by a linear function of the height of the tree, and then we prove that a.a.s.\ the tree has a logarithmic height.
The second step is usually straightforward (see Lemma~\ref{lem:multichildren} for an example) and the tricky part is defining the `coupled' tree.
Let us give some examples.

To distinguish between a vertex of the graph and that of the tree, the latter is referred to as a `node'.
For models studied in Section~\ref{sec:uniformly}, namely the forest fire model~\cite{forest_fire_journal}, the copying model~\cite{copying_model_def}, and the PageRank-based selection model~\cite{pagerank_model_journal},
the coupled tree is a random recursive tree with weighted edges, which has the same node set as the vertex set of the graph.
Let us assume that the initial graph has one vertex,
so the tree starts with a single node corresponding to this initial vertex.
These models evolve as follows:
in every time-step a new vertex, say $v$, is born and is joined to some random vertices, say $w_1,\dots,w_d$, in the existing graph in such a way that  for each $j$, vertex $w_j$ has a short distance to a u.a.r.\ vertex $x_j$ of the existing graph.
We let the coupled tree evolve as follows:
a new node $v$ is born and is joined to node $x_1$ in the existing tree, and the weight of the edge $vx_1$ in the tree is set to be the distance between $v$ and $x_1$ in the graph.
Then by induction, the distance in the graph between the initial vertex and  $v$ is at most the weighted distance in the tree between the initial node and  node $v$.
Moreover, by construction, the tree evolves as a weighted random recursive tree.
Finally, examining the distribution of the weights carefully, we prove that a.a.s.\ the obtained evolving tree has a logarithmic weighted height.

We remark that in the argument outlined above, we may  \emph{ignore} the other neighbours $w_2,\dots,w_d$ of the new vertex; only the first edge $vw_1$ is effectively used for bounding the diameter.
This is a repeating phenomenon in our arguments.
An interesting implication is that one can quickly and locally build a spanning tree with logarithmic diameter as the graph evolves.
This might have algorithmic applications.

In Section~\ref{sec:pref} we study models that incorporate preferential attachment.
As a simple example, consider the following evolving rule:
in every time-step, a vertex is chosen using preferential attachment, i.e.\ the probability of choosing a specific vertex is proportional to its degree, then a new vertex is born and is joined to the chosen vertex.
It is easy to observe that sampling a vertex using preferential attachment can be done by choosing a u.a.r.\ endpoint of a u.a.r.\ edge of the graph.
Using this sampling procedure, the evolving rule can be re-stated as follows:
in every time-step, an edge $e$ is sampled
u.a.r., then a random endpoint $w$ of $e$ is chosen,
then a new vertex $v$ is born and is joined to $w$.
One of the main novel ideas in this paper is introducing \emph{edge trees} and employing them in this context.
An \emph{edge} tree is a tree whose nodes correspond to the \emph{edges} of the evolving graph.
We couple the evolving graph with an edge tree, and let the edge tree evolve as follows:
in the corresponding time-step a new node $vw$ is born and is joined to a u.a.r.\ node $e$.
Clearly, the edge tree indeed grows like a random recursive tree as the graph evolves, hence its height can be easily bounded.
Moreover, the constructed coupling implies that the graph's diameter is dominated by a linear function of the tree's height, so we conclude that a.a.s.\ the graph has logarithmic diameter.
If a reader wants to read only one theorem from this paper, Theorem~\ref{thm:pref} should be the one,
which formalizes and generalizes this idea and illustrates the crux of our technique without having too much details.
This theorem states that Model~\ref{def_pref}, a generic model based on the preferential attachment scheme, has logarithmic diameter; the Aiello-Chung-Lu models~\cite{acl_model} and the generalized linear preference model~\cite{bu_towsley} are then proved to be special cases of this model.

In the generalized linear preference model, the probability of choosing a specific vertex is proportional to a linear function, say $ax + b$, of the vertex's degree $x$.
Assuming $a$ and $b$ are even positive integers, we handle this by putting $a$ multiple edges corresponding to each edge, and putting $b/2$ loops at each vertex.
Then choosing a u.a.r.\ endpoint of a u.a.r.\ edge in the new graph corresponds to sampling according to the linear function of the degrees in the old graph.
See Theorem~\ref{thm:remark} for details.
At the end of Section~\ref{sec:pref},
we also analyse the `preferential attachment with random initial degrees' model~\cite{randominitial},
and show that if the initial degrees' distribution has an exponential decay, then a.a.s.\ the generated graph has a polylogarithmic diameter.
This is straightforward to prove using the developed machinery, see Theorem~\ref{thm:parid}.

In Section~\ref{sec:directed} we study the `directed scale-free graphs'~\cite{directed_scale_free}.
The diameter of a directed graph is defined as that of the underlying undirected graph (we follow~\cite{forest_fire_journal} in this regard).
When constructing a graph using this model, one may sample vertices according to linear functions of either out-degrees or in-degrees, 
and the two functions have different constant terms.
To cope with this, we introduce `{headless}' and `{tailless}' edges. 
These are dummy edges in the graph that do not 
play any role in connecting the vertices, but they appear in the tree and their job is just to adjust the selection probabilities.
Details can be found in Theorem~\ref{thm:directed}, which states that a generalized version of directed scale-free graphs has logarithmic diameter.


In Section~\ref{combine} we study the Cooper-Frieze model~\cite{cooper-frieze-model},
which is the most general evolving model known to have a power-law degree sequence.
In this model, the neighbours of a new vertex can be chosen either according to degrees or uniformly at random.
For dealing with this intricacy, we couple with a tree having two types of nodes: some correspond to the vertices, and the others correspond to the edges of the graph.
A multi-typed random recursive tree is obtained, in which at every time-step a new node is born and is joined to a node chosen u.a.r.\ from all nodes of a certain type.
We prove that a.a.s\ this tree has a logarithmic height, 
and by using the coupling's definition we conclude that a.a.s\ the Cooper-Frieze model has a logarithmic diameter (see Theorem~\ref{thm:cooper-frieze}).
For this model, proving that the tree has logarithmic height is actually the harder step.

Finally, in Section~\ref{sec:other} we prove logarithmic upper bounds for three further models:
graphs generated by the pegging process~\cite{pegging_def},
random unordered increasing $k$-trees~\cite{randomktree_def},
and random $k$-Apollonian networks~\cite{high_RANs}.
For the first and last of these, it is already known that a.a.s.\ the diameter is $O(\log n)$, but our approach gives a shorter proof. 

\subsection{Related work}

Surprisingly few results are known about the diameters of evolving random graph models.
Chung and Lu~\cite{coupling_online_offline} defined an evolving (online) and a non-evolving (offline) model.
They state that `The online model is obviously much harder to analyze than the offline model', 
and hence analyse the former by coupling it with the latter, which had been analysed before.
The difficulty of analysing evolving models over non-evolving ones arises perhaps due to the dependencies between edges in the former models.

The evolving model that has attracted the most attention is the \emph{linear preference model}, in which in every step a new vertex is born and is joined to a fixed number of old vertices. 
This is done in such a way that the probability of joining to a given vertex is proportional to a linear function of its degree.
A logarithmic upper bound has been proved for the diameter of this model, and sharper results are known in various special cases~\cite{random_recursive_trees,diameter_preferential_attachment,diameters_pa,vander}.
See the remark before Theorem~\ref{thm:remark} for details.
When the new vertex is joined to exactly one vertex in the existing graph (so the resulting evolving graph is always a tree),
a general technique based on branching processes is developed by Bhamidi~\cite{bhamidi}, 
using which he proved the diameter of a variety of preferential attachment trees is a.a.s\ $\Theta(\log n)$.

Chung and Lu~\cite{coupling_online_offline} used couplings with a non-evolving random graph model to prove that the diameter of a certain growth-deletion model is $\Theta(\log n)$.
On one hand, their model is more general than the models we consider, as they allow vertex and edge deletions,
but on the other hand, their result holds for graphs with at least $\omega (n \log n)$ edges whereas our results covers  graphs with $O(n)$ edges, too.
Moreover, their proof is quite technical and uses general martingale inequalities.
See the remark before Theorem~\ref{thm:remark} for details.

Other evolving models whose diameters have been studied include 
the Fabrikant-Koutsoupias-Papadimitriou model~\cite{fkp_not_powerlaw},
protean graphs~\cite{protean_diameter},
the geometric preferential attachment model~\cite{geometric2,geometric_hybrid},
the spatial preferred attachment model~\cite{spa_typical}, random Apollonian networks~\cite{we_rans_abstract,CF13,istvan},
and random surfer Webgraphs~\cite{we_surfer}.
See~\cite[Section~14]{inhom} and~\cite{vander} for collections of results on diameters of non-evolving models.

Some of the above papers estimate the diameter up to constant factors, or even up to $1+o(1)$ factors.
Our approach gives logarithmic upper bounds that are perhaps not tight, but on the positive side, it is applicable to a broad variety of models, including those incorporating preferential attachment.
Another advantage of our technique is simplicity: all proofs given here are elementary and fairly short, and the only probabilistic tools we use are couplings and the Chernoff bound.
The third advantage of our technique is that the constant factor it gives (hidden in the $O(\log n)$ notion) is typically small: for all the models studied here, the constant is at most 20.

Let us emphasize that we are concerned with upper bounds only and no lower bound for the diameter is proved in this paper.
We believe that for all considered models, at least in the special case when the evolving graph is always a tree,
the diameter is $\Theta(\log n)$.

\subsection{Notation}
In this paper graphs can be directed or undirected, but all trees are undirected.
The distance between two vertices is the number of edges in the shortest path connecting them.
If the graph is directed, the direction of edges is ignored when calculating the distance.
The \emph{diameter} of a graph is the maximum distance between any two vertices.
We will work with (weakly) connected graphs only, so the diameter is always well defined.
Graphs may have parallel edges and loops (note that adding these does not change the diameter).
All considered graphs are finite and rooted, i.e.\ there is a special vertex which is called the root.
The \emph{depth} of a vertex is its distance to the root,
and the \emph{height} of a graph is the maximum depth of its vertices. 
Clearly the diameter is at most twice the height, and we always bound the diameter by bounding the height.
The depth of vertex $v$ in graph $G$ is denoted by $\depth(v,G)$.
All logarithms are in the natural base.
Let us denote 
$\N=\{1,2,\dots\}$,
$\Nz=\{0,1,2,\dots\}$,
and 
$[n]=\{1,2,\dots,n\}$.

A \emph{growing graph} is a sequence $(G_t)_{t=0}^{\infty}$ of random graphs such that $G_t$ is a subgraph of $G_{t+1}$ for all $t\in\Nz$.
We always assume that $G_0$ has size $O(1)$.
A \emph{growing tree} is defined similarly.
This sequence can be thought of as a graph `growing' as time passes, and $G_t$ is the state of the graph at time $t$.
We write informal sentences such as `at time $t$, a new vertex is born and is joined to a random vertex of the existing graph,'
which formally means `$G_t$ is obtained from $G_{t-1}$ by adding a new vertex and joining it to a random vertex of $G_{t-1}$.'

\section{Basic technique}
\label{sec:uniformly}
The following lemma exemplifies proving a variant of a random recursive tree has logarithmic height.
The argument here is inspired by a proof in
Frieze and Tsourakakis~\cite{first}.
We will use a simple inequality:
let $a_1,\dots,a_m$ be positive numbers, and let $h\in[m]$.
Then observe that 
$$
\sum_{1 \le t_1 < \dots < t_h\le m}\ \left( \prod_{k=1}^{h}a_{t_k} \right)
< \frac{1}{h!}
\left(\sum_{i=1}^{m}a_i\right)^{h} \:.$$

\begin{lemma}
\label{lem:multichildren}
Let $(A_t)_{t\in\N}$ be a sequence of $\N$-valued  random variables.
Consider a growing tree $(T_t)_{t=0}^{\infty}$ as follows.
$T_0$ is arbitrary.
At each time-step $t\in\N$, 
a random vector $(W_1,W_2,\dots,W_{A_t}) \in V(T_{t-1})^{A_t}$
is chosen in such a way that for each $i\in[A_t]$
and each $v\in V(T_{t-1})$,
the marginal probability $\p{W_i=v}$ equals $|V(T_{t-1})|^{-1}$. 
In other words, each $W_i$ is a node of $T_{t-1}$ sampled uniformly;
however, the $W_j$'s may be correlated.
Then $A_t$ new nodes $v_1,\dots,v_{A_t}$ are born and $v_i$ is joined to $W_i$ for each $i\in[A_t]$.
Let $\ell=\ell(n),u=u(n)$ be positive integers such that $\ell\le A_t \le u$ for all $t\in[n]$.
Then the height of $T_n$ is a.a.s.\ at most $(u/\ell) e\log n + 2ue+O(1)$.
\end{lemma}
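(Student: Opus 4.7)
I would apply the first moment method to $N_h$, the number of nodes of $T_n$ at depth at least $h$; by Markov's inequality it suffices to prove $\mathbb{E}[N_h] = o(1)$ for the claimed $h$.

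First I would enumerate ancestor chains. Every node of depth $\geq h$ sits at the end of a sequence $v_0, v_1, \dots, v_h$ where $v_0 \in V(T_0)$, $v_i$ is the parent of $v_{i+1}$, and the birth times satisfy $1 \leq \tau_1 < \dots < \tau_h \leq n$, with $\tau_i$ the time $v_i$ appears. The attachment events at distinct time-steps are independent, and within step $\tau_i$ the variables $W_1, \dots, W_{A_{\tau_i}}$ may be correlated but each satisfies $\mathbb{P}[W_j = v_{i-1}] = |V(T_{\tau_i - 1})|^{-1}$ by hypothesis. Summing over the $A_{\tau_i}$ choices for which new node plays the role of $v_i$, together with the $|V(T_0)|$ choices for $v_0$, produces
\[
\mathbb{E}[N_h] \;\leq\; |V(T_0)| \sum_{1 \leq \tau_1 < \dots < \tau_h \leq n} \prod_{i=1}^{h} \frac{A_{\tau_i}}{|V(T_{\tau_i - 1})|}.
\]

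Next I would plug in the deterministic bounds $A_{\tau_i} \leq u$ and $|V(T_{\tau_i - 1})| \geq \max(1,\ell(\tau_i - 1))$, both immediate from $\ell \leq A_s \leq u$, to obtain
\[
\sum_{\tau=1}^{n} \frac{A_\tau}{|V(T_{\tau-1})|} \;\leq\; u + \frac{u}{\ell}\sum_{k=1}^{n-1}\frac{1}{k} \;\leq\; \frac{u}{\ell}\log n + 2u.
\]
Feeding this into the factorial inequality displayed in the excerpt and invoking Stirling's refinement $h! \geq \sqrt{2\pi h}\,(h/e)^{h}$ then gives
\[
\mathbb{E}[N_h] \;\leq\; \frac{|V(T_0)|}{\sqrt{2\pi h}} \left(\frac{(u/\ell)e \log n + 2ue}{h}\right)^{h}.
\]
Choosing $h = (u/\ell) e \log n + 2ue + C$ for any constant $C \geq 0$ makes the base equal to $1 - C/h$, so the $h$-th power is at most $e^{-C}$ while the prefactor $|V(T_0)|/\sqrt{h}$ tends to $0$.

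The only conceptual point is the observation at the start: the possible correlations among $W_1, \dots, W_{A_t}$ within a single time step do no harm, because each $W_j$ has the claimed uniform marginal and that is all the union bound over ancestor chains ever uses. The rest is bookkeeping, with the exact additive term $2ue$ arising from the index shift in $|V(T_{\tau_i - 1})| \geq \ell(\tau_i - 1)$; it is the Stirling correction $\sqrt{h}$ (rather than merely $h! \geq (h/e)^{h}$) that lets the slack beyond $2ue$ be a constant instead of growing with $u/\ell$.
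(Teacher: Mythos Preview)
Your proposal is correct and follows essentially the same approach as the paper: a first-moment/union bound over ancestor chains, the factorial inequality to collapse the sum, and Stirling to finish. The paper phrases it as bounding $\p{\exists\text{ node at depth }h+n_0}$ rather than $\e{N_h}$, and indexes the chain so that the product runs over $k=2,\dots,h$ with a global $u^h$ factor instead of your $|V(T_0)|\prod A_{\tau_i}$, but these are cosmetic differences.

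One small imprecision worth noting: your sentence ``every node of depth $\ge h$ sits at the end of a sequence $v_0,\dots,v_h$ with $v_0\in V(T_0)$'' is not literally true, since the $T_0$-ancestor of such a node may itself have positive depth in $T_0$, shortening the post-$T_0$ chain below $h$. The paper handles this by bounding depth $h+n_0$ rather than $h$; since $n_0=|V(T_0)|=O(1)$, the discrepancy is absorbed in your $O(1)$ term and the argument goes through unchanged.
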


Note that we do not require any independence for $(A_t)_{t\in\N}$. In particular they can be correlated and depend on the past and the future of the process.

\begin{proof}
Let $n_0=|V(T_0)|$.
For a given integer $h = h(n)$, let us bound the probability that $T_n$ has a node at depth exactly $h+n_0$.
Given a sequence $1 \le t_1 < t_2<\dots < t_h\le n$,
the probability that there exists a path $v_{t_1}v_{t_2}\dots v_{t_h}$ in $T_n$ such that $v_{t_j}$ is born at time $t_j$ is at most
$$
u^{h} \prod_{k=2}^{h}\frac{1}{n_0+ \ell\cdot(t_k-1)} \:,
$$
since there are at most $u^{h}$ choices for $(v_{t_1},\dots,v_{t_h})$, 
and for each $k=2,3,\dots,h$, when $v_{t_k}$ is born, there are at least $n_0+ \ell\cdot(t_k-1)$ nodes available for it to join to.
By the union bound, the probability that $T_n$ has a node at depth $h+n_0$ is at most
\begin{align*}
u^{h}\sum_{1 \le t_1 < \dots < t_h\le n}\ \left( \prod_{k=2}^{h}\frac{1}{n_0+ \ell\cdot(t_k-1)}\right)
& < \frac{u^{h}}{h!}\left(1 + \sum_{j=1}^{n-1}\frac{1}{n_0+\ell j}\right)^{h} \\
& < \left(\frac{u e}{h} 
\cdot \left( \frac{\log n}{\ell} + 2\right) \right)^h
\bigg/\sqrt{2\pi h} \:,
\end{align*}
where we have used Stirling's formula
and the inequality $1 + \frac12 + \frac13 + \dots + \frac{1}{n-1} < 1 +\log n$.
Putting $h \ge (u/\ell) e\log n + 2ue$ makes this probability $o(1)$.
Hence a.a.s.\ the height of $T_n$ is at most 
$(u/\ell) e\log n + 2ue + n_0$, as required.
\end{proof}

Given $p\in(0,1]$, let $\geo(p)$ denote a geometric random variable with parameter $p$; namely 
$\p{\geo(p) = k} = (1-p)^kp$
for every $k\in\Nz$.
The first model we study is the \emph{basic forest fire model}
of Leskovec, Kleinberg, and Faloutsos~\cite[Section~4.2.1]{forest_fire_journal}.

\begin{model}
\label{def:forest_fire}
Let $p,q\in[0,1]$ be arbitrary.
We build a growing directed graph as follows.
$G_0$ is an arbitrary weakly connected directed graph.
At each time-step $t\in\N$, a new vertex $v$ is born and 
edges are created from it to the existing graph using the following process.
\begin{enumerate}
\item All vertices are marked `unvisited.' An \emph{ambassador vertex} $W$ is sampled uniformly from the existing graph.
\item Vertex $v$ is joined to $W$ and $W$ is marked as `visited.'
\item We independently generate two random variables $X=\geo(p)$ and $Y=\geo(q)$. 
We randomly select $X$ unvisited out-neighbours and $Y$ unvisited in-neighbours of $W$.
If not enough unvisited in-neighbours or out-neighbours are available, we select as many as we can.
Let $W_1,\dots,W_{Z}$ denote these vertices. 
\item Vertex $v$ is joined to $W_1, \dots, W_{  Z}$, then we apply steps 2--4 recursively to each of $W_1, \dots, W_{Z  }$.
\end{enumerate}
\end{model}

\begin{theorem}
Consider $(G_t)_{t=0}^{\infty}$ generated by Model~\ref{def:forest_fire}.
A.a.s.\ for every vertex $v$ of $G_n$ there exists a directed path of length at most $e\log n + O(1)$ connecting $v$ to some vertex of $G_0$.
In particular, a.a.s.\ the diameter of $G_n$ is at most $2e\log n + O(1)$.
\end{theorem}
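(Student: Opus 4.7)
The plan is to couple the growing graph $(G_t)$ with a random recursive tree $(T_t)$ and then invoke Lemma~\ref{lem:multichildren}. Let $T_0$ be any spanning tree of the underlying undirected graph of $G_0$, rooted at some fixed vertex $r\in V(G_0)$; since $G_0$ has $O(1)$ vertices, so does $T_0$. At each time-step $t\in\N$, when a new vertex $v$ is born in $G_t$ and the ambassador vertex $W$ is sampled uniformly from $V(G_{t-1})=V(T_{t-1})$ in step~1 of Model~\ref{def:forest_fire}, I would add $v$ to $T_{t-1}$ as a new node and join it to $W$ by a tree edge. The edges created subsequently in steps~3--4 of the forest fire process are deliberately discarded by the coupling — this is the key simplification, and the source of the small constant.

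Under this coupling, $(T_t)_{t\ge0}$ fits the setting of Lemma~\ref{lem:multichildren} with $A_t=1$ for every $t$, so taking $\ell=u=1$ the lemma yields that a.a.s.\ $\height(T_n)\le e\log n + 2e + O(1)$. This is the step where one invokes the previously proved tree bound; everything from here is a deterministic consequence of the coupling.

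Next I would observe that every non-$T_0$ edge of $T_n$ is present in $G_n$ as a directed edge from child to ambassador, because step~2 of Model~\ref{def:forest_fire} explicitly directs an edge from $v$ to $W$. Therefore, for any $v\in V(G_n)$, following the ancestor chain of $v$ in $T_n$ upward produces a sequence of directed edges of $G_n$ pointing from each descendant to its parent; this walk reaches a vertex of $V(G_0)$ no later than it reaches $r$, giving a directed $(v,V(G_0))$-path in $G_n$ of length at most $\height(T_n)\le e\log n + O(1)$ a.a.s. This establishes the first assertion.

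For the diameter bound, given any pair $u,v\in V(G_n)$, concatenate their directed paths to $V(G_0)$ with an undirected path inside $G_0$ connecting their endpoints; such a connecting path has length $O(1)$ because $G_0$ is weakly connected and has constant size. Ignoring edge directions, this yields an undirected $(u,v)$-walk of length at most $2e\log n + O(1)$, which bounds the diameter. I expect no serious obstacle: once one realises that the ambassador edge from step~2 alone suffices to define the coupled recursive tree, the remainder is bookkeeping and a direct appeal to Lemma~\ref{lem:multichildren}.
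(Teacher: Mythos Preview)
Your proposal is correct and follows essentially the same approach as the paper: define a spanning tree by attaching each new vertex only to its ambassador, apply Lemma~\ref{lem:multichildren} with $A_t=\ell=u=1$, and read off the height bound. Your write-up is in fact a bit more explicit than the paper's about why the ancestor chain yields a \emph{directed} path to $V(G_0)$ and how the diameter bound follows from two such paths plus an $O(1)$ detour inside $G_0$.
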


\begin{proof}
We define a growing tree  $(T_t)_{t=0}^{\infty}$ in such a way that $T_t$ is a spanning tree of $G_t$ for all $t\in\Nz$:
$T_0$ is an arbitrary spanning tree of $G_0$.
For every $t\in\N$, if $v$ is the vertex born at time $t$ and $w$ is the corresponding ambassador vertex, then $v$ is joined only to $w$ in $T_t$.
By Lemma~\ref{lem:multichildren}, a.a.s.\ the height of $T_n$ is at most $e\log n + O(1)$.
\end{proof}

We next study the \emph{linear growth copying model} of Kumar, Raghavan, Rajagopalan, Sivakumar, Tomkins, and Upfal~\cite[Section~2.1]{copying_model_def}.

\begin{model}
\label{def:copying}
Let $p\in[0,1]$ and $d\in\N$.
We build a growing directed graph in which every vertex has out-degree $d$, and there is a fixed ordering of these $d$ edges.
$G_0$ is an arbitrary weakly connected directed graph with all vertices having out-degree $d$.
In each time-step $t\in\N$ a new vertex $v$ is born
and $d$ outgoing edges from $v$ to the existing graph are added, as described below.
An ambassador vertex $W$ is sampled uniformly from the existing vertices.
For $i\in[d]$, the head of the $i$-th outgoing edge of $v$ is chosen as follows:
with probability $p$, it is a random vertex of the existing graph sampled uniformly,
and with probability $1-p$ it is the head of the $i$-th outgoing edge of $W$, in which case we say $v$ has \emph{copied} the $i$-th outgoing edge of $W$.
\end{model}

\begin{theorem}
A.a.s.\ the diameter of $G_n$ defined in Model~\ref{def:copying} is at most $4e\log n + O(1)$.
\end{theorem}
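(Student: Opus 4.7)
The plan is to couple the copying model $(G_t)$ with a weighted random recursive tree $(T_t)$ on the same vertex set, in such a way that for every vertex $v$ the graph depth of $v$ in $G_n$ is bounded by the weighted depth of $v$ in $T_n$. A bound on the unweighted height of $T_n$ via Lemma~\ref{lem:multichildren} will then yield the theorem.

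To construct $T$, take $T_0$ to be any spanning tree of $G_0$ with all edge weights $1$; since $|G_0|=O(1)$ this contributes only an additive $O(1)$ to the final bound. At each time-step $t$, when a new vertex $v$ is born with ambassador $W$, inspect the outgoing edges of $v$. If at least one of them is a copy of the corresponding outgoing edge of $W$, I declare the parent of $v$ in $T_t$ to be $W$ and give the new tree edge weight $2$: the copied edge provides a common out-neighbor of $v$ and $W$, so the undirected graph distance between them is at most $2$. Otherwise all $d$ outgoing edges of $v$ have been chosen uniformly at random (an event of probability $p^d$), and I let the parent of $v$ in $T_t$ be $h_1$, the head of the first outgoing edge of $v$, with weight $1$; this is valid because $v$ is joined directly to $h_1$ in $G_t$.

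The key verification is that, in either case, the parent of $v$ is uniformly distributed on $V(T_{t-1})$. Since $W$ is uniform and independent of the $d$ copy indicators, the conditional distribution of $W$ given the event "at least one edge is copied" is still uniform. Conditional on the complementary event "no edge is copied," the first outgoing edge of $v$ is chosen uniformly, so $h_1$ is uniform on $V(T_{t-1})$. Weighting by the probabilities $1-p^d$ and $p^d$ of these two events, the parent is unconditionally uniform on $V(T_{t-1})$, and the choices at different time steps are independent; hence $T$ evolves as a random recursive tree with edge weights in $\{1,2\}$.

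Applying Lemma~\ref{lem:multichildren} with $\ell=u=1$ gives that the unweighted height of $T_n$ is a.a.s.\ at most $e\log n+O(1)$. Since every edge weight is at most $2$, the weighted depth of any vertex—and therefore its depth in $G_n$—is a.a.s.\ at most $2e\log n+O(1)$, and the bound diameter $\le 2\cdot\height$ then delivers the claimed $4e\log n+O(1)$. The main conceptual obstacle is finding a parent assignment that is simultaneously uniformly distributed and yields a bounded graph edge weight; splitting on whether copying occurs is what makes this work, because exactly in the "no copies" case, where the ambassador loses any direct connection to $v$, the first outgoing edge of $v$ is itself uniform and can take over the role of the tree parent with weight $1$.
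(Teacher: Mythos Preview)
Your proof is correct and follows essentially the same approach as the paper: the same case split on whether $v$ copies at least one edge of the ambassador, the same parent assignment ($W$ in the copying case, the head of the first outgoing edge otherwise), and the same verification that the parent is uniform on $V(T_{t-1})$. The only cosmetic difference is that you phrase the bookkeeping via edge weights in $\{1,2\}$ and the inequality $\depth(v,G_t)\le\text{weighted depth}(v,T_t)$, whereas the paper keeps the tree unweighted and maintains $\depth(v,G_t)\le 2\,\depth(v,T_t)$; these are equivalent.
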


\begin{proof}
We inductively define a growing tree  $(T_t)_{t=0}^{\infty}$ in such a way that 
the node set of $T_t$ equals the vertex set of $G_t$
for all $t$.
We prove by induction that for each $v\in V(G_t)$,
$\depth(v,G_t) \le 2 \depth(v,T_t)$.
Let $T_0$ be a breadth-first search tree of $G_0$, rooted at the root of $G_0$.
For each $t\in\N$, let $v$ be the vertex born at time $t$, and let $w$ be the corresponding ambassador vertex.
We consider two cases:
\begin{description}
\item[Case 1. $v$  copies at least one outgoing edge of $w$.]
In this case, we join $v$ to $w$ in $T_t$.
Since $v$ and $w$ have distance 2 in $G_t$, 
$\depth(v,G_t)\le \depth(w,G_t)+2$, so by the induction hypothesis for $w$,
$$
\depth(v,G_t) 
\le \depth(w,G_t) + 2
\le 2 \depth(w,T_t) + 2
= 2 \depth(v,T_t)
\:,$$
as required.

\item[Case 2. $v$ does not copy any outgoing edge of $w$.]
Let $x$ denote the head of the first outgoing edge of $v$.
In this case, we join $v$ to $x$ in $T_t$.
Using the induction hypothesis for $x$,
\begin{align*}
\depth(v,G_t) 
\le \depth(x,G_t) + 1
& \le 2 \depth(x,T_t) + 1 \\
& < 2 \depth(x,T_t) + 2
= 2 \depth(v,T_t)
\:,
\end{align*}
as required.
\end{description}
Notice that in either case, node $v$ is joined to a node of $T_{t-1}$ sampled uniformly.
By Lemma~\ref{lem:multichildren}, a.a.s.\ the height of $T_n$ is at most $e\log n + O(1)$,
so a.a.s.\ the diameter of $G_n$ is at most $4e \log n + O(1)$.
\end{proof}


\subsection{Sampling neighbours using PageRank}
\label{sec:pagerank}
In this section we study a model in which the neighbours of each new vertex are chosen according to the PageRank distribution.
We recall the definition of PageRank.

\begin{definition}[PageRank~\cite{pagerank_def}]
\label{def:pagerank}
Let $q\in[0,1]$ and let $G$ be a directed graph.
\emph{PageRank} is the unique probability distribution $\pi_q : V(G) \to [0,1]$ that satisfies
\begin{equation}
\label{eq:pagerank}
\pi_q (v) = \frac{1-q}{|V(G)|} + q \sum_{u\in V(G)} \frac{\pi_q(u) \cdot \#(uv)}{\outdeg(u)} \:.
\end{equation}
Here $\#(uv)$ denotes the number of copies of the directed edge $uv$ in the graph (which is zero if there is no edge from $u$ to $v$), and
$\outdeg(u)$ denotes the out-degree of $u$.
\end{definition}
PageRank is used as a ranking mechanism in Google~\cite{google}.
More details and applications can be found in~\cite{pagerank_deep}.

\begin{model}
\label{def:pagerankmodel}
Let $p_a,p_b,p_c$ be nonnegative numbers summing to 1, let $q\in[0,1]$ and $d\in\N$.
We build a growing directed graph $(G_t)_{t=0}^{\infty}$ in which every vertex has out-degree $d$.
$G_0$ is a weakly connected directed graph with all vertices having out-degree $d$.
In each time-step $t\in\N$, a new vertex is born
and $d$ outgoing edges from it to the existing graph are added.
The heads of the new edges are chosen independently.
For choosing the head of each edge, we perform one of the following operations, independently of previous choices.
\begin{enumerate}[(a)]
\item
With probability $p_a$, the head is a vertex sampled uniformly from the existing graph.
\item
With probability $p_b$, it is the head of an edge sampled uniformly from the existing graph.
\item
With probability $p_c$, it is a vertex sampled from the existing graph using $\pi_q$.
\end{enumerate}
\end{model}

Model~\ref{def:pagerankmodel} is defined by Pandurangan, Raghavan, and Upfal~\cite[Section~2]{pagerank_model_journal}.
They call it the \emph{hybrid selection model}.
For the special case $p_b=0$,
which is referred to as the \emph{PageRank-based selection model}, it has been proved using a different argument that a.a.s.\ the diameter is $O(\log n)$~\cite{we_surfer}.
For bounding the diameter of Model~\ref{def:pagerankmodel} we will need a lemma.

\begin{lemma}
\label{lem:cm}
Assume that $q<1$.
There exists a random variable $L$ such that the head of each new edge in Model~\ref{def:pagerankmodel} can be obtained by sampling a vertex $W$ uniformly from the existing graph 
and performing a simple random walk of length $L$ starting from $W$.
Moreover, $L$ is stochastically smaller than $1+\geo(1-q)$.
\end{lemma}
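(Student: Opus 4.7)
The plan is to realize each of the three head-selection rules (a), (b), (c) of Model~\ref{def:pagerankmodel} as a uniform start followed by a certain number of simple-random-walk steps on the existing directed graph, and then to compare the mixture of these three walk lengths against $1+\geo(1-q)$.

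Rule (a) is already of this form with $L=0$. Rule (b) corresponds to $L=1$: every vertex of the existing graph has out-degree exactly $d$, so picking an edge uniformly is equivalent to picking a uniform vertex $W$ and then one of its $d$ out-edges uniformly, whose head is precisely the endpoint of a single random-walk step from $W$.

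The substantive case is (c), which I would handle via the classical ``random walk with restart'' interpretation of PageRank. Let $\pi$ denote the uniform distribution on $V(G)$ and let $P$ be the simple-random-walk transition matrix whose $(u,v)$-entry is $\#(uv)/\outdeg(u)$. If one samples $W\sim\pi$ and then walks $L'\sim\geo(1-q)$ independent steps from $W$, the resulting vertex has distribution
$$\mu \;=\; (1-q)\sum_{k\ge 0} q^k\, \pi P^k \;=\; (1-q)\,\pi\,(I-qP)^{-1},$$
where invertibility of $I-qP$ is ensured by $q<1$. On the other hand, reading~\eqref{eq:pagerank} in matrix form gives $\pi_q = (1-q)\pi + q\,\pi_q P$, i.e.\ $\pi_q(I-qP)=(1-q)\pi$, hence $\pi_q=\mu$. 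So rule (c) is realized by taking $\geo(1-q)$ steps from a uniformly chosen start.

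Combining the three cases, I would let $L$ equal $0$ with probability $p_a$, equal $1$ with probability $p_b$, and be a fresh sample from $\geo(1-q)$ with probability $p_c$, independently of $W$. Then $W$ followed by $L$ random-walk steps reproduces the head-of-new-edge distribution. For the stochastic domination, a tail comparison suffices: $\p{L\ge 1}\le 1=\p{1+\geo(1-q)\ge 1}$, and for each $k\ge 2$ only case (c) can contribute, giving
$$\p{L\ge k}\;=\;p_c\, q^k \;\le\; q^{k-1}\;=\;\p{1+\geo(1-q)\ge k}.$$
I do not foresee any real obstacle; the one step demanding care is the PageRank identity above, which nonetheless falls out directly from the defining equation~\eqref{eq:pagerank}.
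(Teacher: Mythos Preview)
Your proposal is correct and follows essentially the same approach as the paper: both realize rules (a), (b), (c) as a uniform start followed by $0$, $1$, or $\geo(1-q)$ random-walk steps, and define $L$ as the corresponding mixture. The only minor difference is that you establish $\mu=\pi_q$ by summing the geometric series and inverting $I-qP$, whereas the paper verifies that $\mu$ satisfies the defining equation~\eqref{eq:pagerank} and appeals to uniqueness; you also spell out the tail comparison for the stochastic domination, which the paper leaves implicit.
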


\begin{proof}
We claim that
$$
L = \begin{cases}
0 & \mathrm{with\ probability\ } p_a \:,\\
1 & \mathrm{with\ probability\ } p_b \:,\\
\geo(1-q) & \mathrm{with\ probability\ } p_c \:. 
\end{cases}
$$
If we sample a vertex uniformly and perform a random walk of length 1, then since all vertices have the same out-degree, the last vertex of the walk is the head of a uniformly sampled edge.

So it suffices to show that if we sample a vertex uniformly and perform a random walk of length $\geo(1-q)$, the last vertex of the walk has distribution $\pi_{q}$.
This was first observed in~\cite{pagerank_random_surfer}.
Let ${\tau}\in[0,1]^{V(G)}$ denote the probability distribution of the last vertex, 
let $\mathcal{P}$ denote the probability transition matrix of the simple random walk, and let $\sigma = \big[1/|V(G)|,1/|V(G)|,\dots,1/|V(G)|\big]^T$
be the uniform distribution.
Then we have
$$
\tau 
= \sum_{k=0}^{\infty} q^k (1-q) \mathcal{P}^k \sigma
= (1-q) \sigma + 
q\mathcal{P}
\left(\sum_{k=1}^{\infty} q^{k-1} (1-q) \mathcal{P}^{k-1} \sigma\right)
=
(1-q) \sigma
+
q\mathcal{P} \tau \:.
$$
Comparing with (\ref{eq:pagerank}) and noting that the stationary distribution of an Ergodic Markov chain is unique, we find that $\tau=\pi_{q}$, as required.
\end{proof}

\begin{paragraph}{The Chernoff bound}
Given $n\in\Nz$ and $p\in[0,1]$,
let $\bin(n,p)$ denote a binomial random variable with parameters $n$ and $p$.
We refer to the following inequality, valid for every $\varepsilon\ge0$, as the \emph{Chernoff bound}.
See Motwani and Raghavan~\cite[Theorem 4.2]{rand_algs} for a proof.
$$\p{\bin(n,p)<(1-\varepsilon)np} \le \exp(-\varepsilon^2 np / 2)\:.$$
\end{paragraph}
\begin{theorem}
If $q<1$, then a.a.s.\ the diameter of $G_n$ defined in Model~\ref{def:pagerankmodel} is at most $18 \log n /(1-q)$.
\end{theorem}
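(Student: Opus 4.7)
I would build a growing tree $(T_t)_{t \ge 0}$ on the same vertex set as $(G_t)$, with \emph{weighted} edges, and reduce the diameter bound for $G_n$ to a bound on the weighted height of $T_n$. Let $T_0$ be a BFS tree of $G_0$, rooted at the root of $G_0$. At each step $t \ge 1$, when the new vertex $v$ of $G_t$ appears, apply Lemma~\ref{lem:cm} to the \emph{first} of its $d$ outgoing edges: its head $h_1$ is produced by sampling a uniform vertex $W$ of $G_{t-1}$ and performing a random walk of length $L$ in $G_{t-1}$, where $L$ is stochastically dominated by $1+\geo(1-q)$. Then, in $T_t$, join $v$ to $W$ with edge-weight $L+1$. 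Since the undirected distance in $G_t$ satisfies $d(v, W) \le 1 + L$ (via $h_1$), a straightforward induction gives $\depth(v, G_t) \le$ weighted depth of $v$ in $T_t$, so that the diameter of $G_n$ is at most twice the weighted height of $T_n$.

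\textbf{Key properties of $T_n$.} By construction, $T_t$ grows as a random recursive tree: at each step, one new node is attached to a uniformly sampled old node. Moreover, the weight $X_t := L_t + 1$ assigned at step $t$ is determined by the random walk performed at step $t$ and is therefore independent of the ambassador choice, of past weights, and of the tree structure. Each $X_t$ is stochastically bounded by $2 + \geo(1-q)$, so the weights $X_1, X_2, \dots$ are a sequence of mutually independent, identically stochastically bounded random variables, independent of the tree.

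\textbf{Weighted height bound.} To show that a.a.s.\ the weighted height of $T_n$ is at most $H := 9\log n/(1-q)$, I would follow the template of the proof of Lemma~\ref{lem:multichildren}. For fixed times $1 \le t_1 < \dots < t_h \le n$, the probability that $T_n$ contains a root-to-descendant path through nodes born at times $t_1, \dots, t_h$ of total weight exceeding $H$ is, by the independence of structure and weights, at most
\begin{equation*}
\prod_{k=2}^{h} \frac{1}{n_0 + t_k - 1} \cdot \p{\textstyle\sum_{i=1}^{h} X_{t_i} > H}.
\end{equation*}
Summing over $t_1 < \dots < t_h$ using the inequality displayed in the excerpt and Stirling's formula yields a bound of the form $\sum_h (e(\log n + 2)/h)^h \cdot M_h / \sqrt{2\pi h}$, where $M_h := \p{\sum_{i=1}^{h}\geo_i(1-q) > H - 2h}$, with the $\geo_i(1-q)$ independent.

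\textbf{Main obstacle.} The crux is estimating $M_h$ via the identity $\p{\sum_{i=1}^h \geo_i(1-q) > K} = \p{\bin(K+h, 1-q) < h}$ together with the Chernoff lower-tail bound stated in the excerpt. I would split the sum at $h^{\ast} = H/2$: for $h \ge h^{\ast}$ the combinatorial factor $(e(\log n+2)/h)^h$ already decays polynomially, while for $h < h^{\ast}$ the Chernoff bound applied to $\bin(H-h, 1-q)$ (whose mean is $\Omega(\log n)$) makes $M_h$ polynomially small. The delicate range is $h \asymp e \log n$, where both factors sit near their critical values; the constant $9$ in $H$ is chosen just large enough that the product is $o(1)$ uniformly. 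Once the weighted height of $T_n$ is shown to be at most $H$ a.a.s., the coupling gives diameter of $G_n$ at most $2H = 18\log n/(1-q)$ a.a.s.
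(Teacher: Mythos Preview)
Your coupling is exactly the paper's: the same weighted random recursive tree $T_t$, the same inductive inequality $\depth(v,G_t)\le$ weighted depth of $v$ in $T_t$, and the same stochastic domination $X_t\preceq 2+\geo(1-q)$ from Lemma~\ref{lem:cm}.

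Where you diverge is in bounding the weighted height of $T_n$, and here your argument has a gap. The paper does it in two clean steps: (i) by Lemma~\ref{lem:multichildren} the \emph{unweighted} height of $T_n$ is a.a.s.\ at most $1.001e\log n$; (ii) for each of the $n+O(1)$ nodes, if its depth $h$ satisfies $h\le 1.001e\log n$ then a single Chernoff application gives $\p{\sum_{i=1}^h X_i>H}\le n^{-1.004}$, and a union bound over nodes finishes. Your one-shot sum $\sum_h M_h\cdot f(h)$ is in principle equivalent, but your split at $h^\ast=H/2$ does not work as stated. For any $q>0$ and $h$ in the nonempty interval $\big(9\log n/(2-q),\,H/2\big)$, one has $h>(H-h)(1-q)$, i.e.\ $h$ exceeds the mean of $\bin(H-h,1-q)$, so the lower-tail Chernoff bound yields nothing and $M_h$ is \emph{not} polynomially small there. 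Even just below $9\log n/(2-q)$ the Chernoff exponent degenerates, so you cannot get $M_h\le n^{-c}$ with $c>1$ uniformly for $h<H/2$; and you need $c>1$ because $\sum_h f(h)=\Theta(n)$.

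The fix is to split at $h^\ast\approx e\log n$ instead: for $h\le 1.001e\log n$ use $M_h\le n^{-1.004}$ (monotone in $h$, so the worst case is the paper's computation at $h=1.001e\log n$) together with $\sum_h f(h)=O(n)$; for $h>1.001e\log n$ use $M_h\le 1$ and $\sum_{h>1.001e\log n}f(h)=o(1)$, which is precisely Lemma~\ref{lem:multichildren}. Once you make this correction, your argument collapses into the paper's two-step proof.
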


\begin{proof}
A \emph{weighted tree} is a tree with nonnegative weights  assigned to the edges.
The \emph{weighted depth} of a node $v$ is defined as the sum of the weights of the edges connecting $v$ to the root.
We define a growing weighted tree  $(T_t)_{t=0}^{\infty}$ 
such  that for all $t$,
the node set of $T_t$ equals the vertex set of $G_t$.
We prove by induction that the depth of each vertex in $G_t$ is at most its weighted depth in $T_t$.
Let $T_0$ be a breadth-first search tree of $G_0$ rooted at the root of $G_0$, 
and let all edges of $T_0$ have unit weights.
Assume that when obtaining $G_t$ from $G_{t-1}$, the heads of the new edges are chosen using the procedure described in Lemma~\ref{lem:cm}.
For every $t\in\N$, if $v$ is the vertex born at time $t$, and $w$ and $l$ are the first sampled vertex and length of the first random walk taken, respectively, then $v$ is joined only to $w$ in $T_t$ and the weight of the edge $vw$ is set to $l+1$.
Note that the edge weights are mutually independent.
Since the distance between $v$ and $w$ in $G_t$ is at most $l+1$,
by induction the weighted depth of $v$ in $T_t$ is at most the depth of $v$ in $G_t$.
We show that a.a.s.\ the weighted height of $T_n$ is at most $ 9  \cdot \log n /(1-q)$, and this completes the proof.

By Lemma~\ref{lem:multichildren}, a.a.s.\ the (unweighted) height of $T_n$ is less than $ 1.001e\log n$.
We prove that any given node at depth at most $ 1.001e\log n$ of $T_n$  has weighted depth at most $  9(\log n)/(1-q)$ with probability $1-o(1/n)$, and then the union bound completes the proof.
Let $v$ be a node of $T_n$ at depth $h$, where $h \le 1.001 e \log n$.
By Lemma~\ref{lem:cm}, the weighted depth of $v$ equals the sum of $h$ independent random variables, each stochastically smaller than $2+\geo(1-q)$.
The probability that the sum of $h$
independent random variables distributed as $2+\geo(1-p)$
is greater than $9\log n / (1-q)$
is
$$ \p{\bin \left(\frac{9\log n}{1-q} - h , 1-q\right) < h}\:.$$
Since $h \le 1.001 e \log n$, using the Chernoff bound we infer that this probability is less than
$
\exp\left(-0.566^2 \times 6.27 (\log n)/2\right)
< n^{-1.004}
$.
\end{proof}

\section{Incorporating preferential attachment: edge trees}
\label{sec:pref}

In this section we study models incorporating preferential attachment.
We first define a model that has a lot of flexibility (Model~\ref{def_pref}) and prove it has logarithmic diameter.
Then we reduce Models~\ref{def_acld},~\ref{def_butowsley}, and~\ref{def_aclc}
to this model.

\begin{model}
\label{def_pref}
Let $(A_t,B_t)_{t=1}^{\infty}$ be sequences of $\Nz$-valued random variables.
Consider a growing undirected graph $(G_t)_{t=0}^{\infty}$ as follows.
$G_0$ is an arbitrary connected graph with at least one edge.
At each time-step $t\in\N$, $G_t$ is obtained from $G_{t-1}$ by doing a vertex operation and an edge operation, as defined below.

In a \emph{vertex operation}, if $A_t>0$, a new vertex is born and $A_t$ edges are added in the following manner:
we sample an edge uniformly from $G_{t-1}$, choose one of its endpoints arbitrarily, and join it to the new vertex.
For the other $A_t - 1$ new edges, one endpoint is the new vertex, and the other endpoint is arbitrary (can be the new vertex as well).
If $A_t=0$ then the vertex operation does nothing.

In an \emph{edge operation}, we independently sample $B_t$ edges uniformly from $G_{t-1}$ and we choose an arbitrary endpoint of each sampled edge.
Then we add $B_t$ new edges, joining these vertices to arbitrary vertices of $G_{t-1}$.
\end{model}

Note that we do not require any independence for $(A_t,B_t)_{t\in\N}$. In particular they can be correlated and depend on the past and the future of the process.

A novel idea in this paper is introducing edge trees: these are
trees coupled with graphs whose nodes correspond to the edges of the graph.
The following theorem demonstrates their usage.

\begin{theorem}
\label{thm:pref}
Let $\ell=\ell(n),u=u(n) \in \N$ be such that $\ell\le A_t + B_t \le u$ for every $t\in\N$.
A.a.s.\ the graph $G_n$ generated by Model~\ref{def_pref} has diameter at most $ 4e(u/\ell) \log n + 8eu + O(1)$.
\end{theorem}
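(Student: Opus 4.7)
The plan is to couple $(G_t)$ with an \emph{edge tree} $(T_t)$ whose node set is $E(G_t)$: let $T_0$ be an arbitrary spanning tree on the vertex set $E(G_0)$, and each time a new graph edge is created, add a matching new node to $T$. Parents in $T$ are assigned as follows. For each $j \in [B_t]$ in the edge operation, the new node $e_j^{\prime t}$ is made a child of its independently sampled parent $f_j^{\prime t} \in E(G_{t-1})$. For the vertex operation with $A_t > 0$, let $f_1^t$ be the uniformly sampled edge that supplies the first endpoint of the new vertex, and let $e_1^t, \ldots, e_{A_t}^t$ be the $A_t$ newly added graph edges; I declare \emph{all} of them to be children of $f_1^t$ in the tree. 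Every new tree node's parent is then marginally uniform on $V(T_{t-1})$, and the per-step growth of $T$ equals $A_t + B_t \in [\ell, u]$. Lemma~\ref{lem:multichildren} (whose hypotheses explicitly allow the heavy correlations introduced by the shared parent among the vertex-operation siblings) therefore gives $\height(T_n) \le (u/\ell)\,e\log n + 2ue + O(1)$ a.a.s.

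To convert this into a diameter bound for $G_n$, I prove by induction on $\depth(e,T_n)$ that for every $e \in V(T_n)$, at least one endpoint of $e$ has distance in $G_n$ at most $2\,\depth(e,T_n) + O(1)$ from a fixed root $r \in V(G_0)$, the $O(1)$ absorbing the diameter of $G_0$. For a tree-child $e'$ of $e$ that is a ``sampled'' edge of its step (namely $e_1^t$, or any $e_j^{\prime t}$), $e'$ shares an endpoint with $e$ in $G_t$, so the depth grows by at most $1$. The only delicate case is $e' = e_i^t$ with $i \ge 2$ in a vertex operation: here $e'$ does \emph{not} share an endpoint with its tree parent $f_1^t = e$, but it shares the newly born vertex with its sibling $e_1^t$, which in turn shares an endpoint with $f_1^t$; walking through $e_1^t$ costs exactly one extra graph edge, which matches the ``$+2$'' budget in the induction.

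Every vertex of $G_n$ is either in $V(G_0)$ (hence at depth $O(1)$) or was born during some vertex operation and is therefore an endpoint of a tree node. The induction thus yields $\max_v \depth(v,G_n) \le 2\,\height(T_n) + O(1)$, and since the diameter is at most twice this maximum depth, combining with the height bound yields the claimed $4e(u/\ell)\log n + 8eu + O(1)$. The main obstacle to beat is precisely the ``non-sampled'' edges $e_2^t,\ldots,e_{A_t}^t$ of the vertex operation: in order for the tree to grow at rate $\ge \ell$ per step (giving the desirable $u/\ell$ factor rather than a na\"ive $u$), all $A_t + B_t$ new graph edges must be inserted into $T$, yet the non-sampled ones are not adjacent in the graph to any uniformly sampled edge of $G_{t-1}$. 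Attaching them as tree-siblings of $e_1^t$ is what resolves the tension, at the modest cost of the factor $2$ in the tree-to-graph depth translation, and hence the factor $4$ out front of $\log n$ in the final bound.
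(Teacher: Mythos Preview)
Your proposal is correct and follows essentially the same approach as the paper: build an edge tree on $E(G_t)$, make all $A_t$ vertex-operation edges children of the single sampled edge $f_1^t$ and each edge-operation edge a child of its own sampled edge, invoke Lemma~\ref{lem:multichildren}, and prove inductively that graph depth is at most twice tree depth. The paper packages the induction slightly more cleanly by defining $\depth(xy,G_t)=1+\min\{\depth(x),\depth(y)\}$ and proving $\depth(e,G_t)\le 2\,\depth(e,T_t)$ directly, but your formulation in terms of ``at least one endpoint'' is equivalent.
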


\begin{proof}
We define the \emph{depth} of an edge $xy$ as $1+\min\{\depth(x),\depth(y)\}$.
We inductively define a growing tree $(T_t)_{t=0}^{\infty}$ such that for all $t\in\Nz$,
$V(T_t) = E(G_t) \cup \{\root\}$.
Here $\root$ denotes the root of $T_t$, which has depth 0.
We prove by induction that for all $e\in E(G_t)$,
$\depth(e, G_t) \le 2 \depth(e,T_t)$.
Let $H$ be the graph obtained from $G_0$ by adding an edge labelled $\root$ incident to its root.
Let $T_0$ be a breadth-first tree of the line graph of $H$ rooted at $\root$. 
(The \emph{line graph} of a graph $H$ is a graph whose vertices are the edges of $H$, and two edges are adjacent if they have a common endpoint.)
Note that $\depth(\root,T_0)=0$ and 
$\depth(e,T_0)=\depth(e,G_0)$ for every $e\in E(G_0)$.

Given $T_{t-1}$, we define $T_t$ and prove the inductive step.
First, consider a vertex operation with $A_t>0$.
Let $v$ be the new vertex, $e_1$ be the sampled edge, and $w_1$ be the chosen endpoint of $e_1$.
Notice that $\depth(v,G_t) \le \depth(e_1,G_t)+1$.
In $T_t$, we join the $A_t$ edges incident with $v$ to $e_1$.
For any such edge $e$ we have
\begin{align*}
\depth(e,G_t)
\le \depth(v,G_t) + 1
\le \depth(e_1,G_t) + 2
\le 2 \depth(e_1,T_t) + 2
= 2 \depth(e,T_t),
\end{align*}
where we have used the inductive hypothesis for $e_1$ in the third inequality.

Second, consider an edge operation.
Let $e_1,\dots,e_{B_t}$ be the sampled edges, and let $w_1,w_2,\dots,w_{B_t}$ be the chosen endpoints.
For each $j\in[B_t]$, in $G_t$ we join $w_j$ to some vertex of $G_{t-1}$, say $x_j$.
In $T_t$, we join the new edge $w_j x_j$ to $e_j$.
We have
\begin{align*}
\depth(w_j x_j,G_t)
\le \depth(w_j,G_t) + 1
& \le \depth(e_j,G_t) + 1\\
& \le 2 \depth(e_j,T_t) + 1
= 2 \depth(w_j x_j,T_t) - 1 \:,
\end{align*}
where we have used the fact that $e_j$ is incident to $w_j$ in the second inequality, and the 
inductive hypothesis for $e_j$ in 
the third inequality.

Hence for all $e\in E(G_t)$, $\depth(e, G_t) \le 2 \depth(e,T_t)$. On the other hand, examining the construction of $(T_t)_{t\in\Nz}$ and using Lemma~\ref{lem:multichildren}, we find that a.a.s.\ the height of $T_n$ is at most $ (u/\ell) e\log n + 2ue+O(1)$.
This implies that a.a.s.\ the diameter of $G_n$ is at most $4(u/\ell) e\log n + 8ue+O(1)$.
\end{proof}

\begin{definition}[$\rho$]
For an undirected graph $G$ and a real number $\delta$, we define the function $\rho_{\delta}:V(G)\to\mathbb{R}$ as 
$$\rho_{\delta}(v) = \frac{\deg(v)+\delta}{\sum_{u\in V(G)} (\deg(u)+\delta)} \:.$$
Here $\deg(v)$ denotes the degree of vertex $v$,
and a loop is counted twice.
Note that if $\delta>-1$ then $\rho_{\delta}$ is a probability distribution.
\end{definition}

Observe that to sample a vertex using $\rho_0$,
one can sample an edge uniformly and then choose one of its endpoints uniformly.
Most of our arguments are based on this crucial fact,
and this is the reason for introducing edge trees.

\begin{model}
\label{def_acld}
Let $\{X_t: t\in\N\}$ be a sequence of $\N$-valued random variables,
and let $\{Y_t,Z_t : t\in\N\}$ be sequences of $\Nz$-valued random variables.
We consider a growing undirected graph $(G_t)_{t=0}^{\infty}$ as follows.
$G_0$ is an arbitrary connected graph with at least one edge.
At each time-step $t\in\N$, $G_t$ is obtained from $G_{t-1}$ by performing the following three operations.
\begin{enumerate}
\item
We sample $X_t$ vertices $N_1,\dots,N_{X_t}$ independently using $\rho_0$.
\item
We sample $2Z_t$ vertices $W_1,W'_1,W_2,W'_2,\dots,W_{Z_t},W'_{Z_t}$ independently using $\rho_0$.
\item
We add a new vertex $v$ and add the edges $W_1W'_1,\dots,W_{Z_t}W'_{Z_t}$, $vN_1,\dots,vN_{X_t}$.
We also add $Y_t$ loops at $v$.
\end{enumerate}
\end{model}

Model~\ref{def_acld} is a generalization of a model defined by Aiello, Chung, and Lu~\cite[Section~2.1, Model D]{acl_model}, which has bounded $X_t,Y_t,Z_t$.
The following theorem implies that a.a.s.\ the latter model has diameter $O(\log n)$.

\begin{theorem}
Let $\ell=\ell(n),u=u(n)$ be positive integers such that $X_t>0$ and $\ell \le X_t+Y_t+Z_t\le u$ for all $t\in\N$.
A.a.s.\ the diameter of $G_n$ generated by Model~\ref{def_acld} is at most $4e(u/\ell) \log n + 8eu+O(1)$.
\end{theorem}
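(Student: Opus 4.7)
The plan is to show that Model~\ref{def_acld} is in fact a special case of Model~\ref{def_pref}, so that the diameter bound follows immediately from Theorem~\ref{thm:pref}. The key observation (already noted just before Model~\ref{def_acld}) is that sampling a vertex according to $\rho_0$ can be implemented by sampling an edge uniformly at random and then choosing one of its two endpoints uniformly. This is exactly the primitive built into the vertex and edge operations of Model~\ref{def_pref}.

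Concretely, I would set $A_t := X_t + Y_t$ and $B_t := Z_t$ and check that the three-step procedure in Model~\ref{def_acld} can be re-expressed as one vertex operation followed by one edge operation in the sense of Model~\ref{def_pref}. For the vertex operation: since $X_t > 0$, I would interpret the sampling of $N_1$ via $\rho_0$ as first drawing an edge $e_1$ of $G_{t-1}$ uniformly and then letting $N_1$ be a uniformly chosen endpoint of $e_1$; this is precisely how Model~\ref{def_pref} selects the endpoint to which the new vertex $v$ must be joined. The remaining $X_t - 1$ neighbours $N_2,\dots,N_{X_t}$ and the $Y_t$ loops at $v$ add $A_t - 1$ further edges all incident with $v$ whose other endpoints are chosen according to some distribution (uniform via $\rho_0$ for the $N_i$'s, equal to $v$ itself for the loops); Model~\ref{def_pref} permits these other endpoints to be arbitrary, and since the word ``arbitrary'' there imposes no constraint whatsoever, any (random) choice is allowed.

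For the edge operation, I would do the same thing $B_t = Z_t$ times: sampling $W_j$ according to $\rho_0$ is the same as drawing an edge $e_j$ uniformly and picking an endpoint, which matches Model~\ref{def_pref} exactly; the second endpoint $W'_j$, even though in Model~\ref{def_acld} it is also chosen via $\rho_0$, plays the role of the ``arbitrary'' endpoint in Model~\ref{def_pref}, so no further verification is needed. Since the $A_t,B_t$ of Model~\ref{def_pref} are allowed to be correlated and to depend on the history, any joint distribution of $(X_t,Y_t,Z_t)$ is admissible.

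Once the reduction is in place, it remains only to check the quantitative hypotheses of Theorem~\ref{thm:pref}: by construction $A_t + B_t = X_t + Y_t + Z_t$, which lies in $[\ell,u]$ by assumption, so Theorem~\ref{thm:pref} yields the desired bound $4e(u/\ell)\log n + 8eu + O(1)$ on the diameter of $G_n$. There is no real obstacle in this argument; the only point requiring a moment's care is making clear that the ``arbitrary'' endpoints in Model~\ref{def_pref} may be assigned according to any (possibly random, possibly history-dependent) rule, so that the extra structure of Model~\ref{def_acld}, such as the $\rho_0$-sampling of $W'_j$ and the loops at $v$, can safely be absorbed into those arbitrary choices.
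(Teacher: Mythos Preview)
Your proposal is correct and follows exactly the paper's own approach: reduce Model~\ref{def_acld} to Model~\ref{def_pref} by setting $A_t = X_t + Y_t$, $B_t = Z_t$, using that $\rho_0$-sampling is the same as picking a uniform edge and an endpoint, and then invoke Theorem~\ref{thm:pref}. You have simply spelled out in more detail the verification that the paper compresses into two sentences.
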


\begin{proof}
We claim that $(G_t)_{t=0}^{\infty}$ grows as described in Model~\ref{def_pref}.
Sampling a vertex using $\rho_{0}$ corresponds to choosing a random endpoint of a random edge.
The three operations of Model~\ref{def_acld} correspond to
applying a vertex operation with $A_{t} = X_t+Y_t$
and an edge operation with $B_t=Z_t$.
By Theorem~\ref{thm:pref}, a.a.s.\ the diameter of ${G}_{n}$ is at most $4e(u/\ell) \log n + 8eu+O(1)$.
\end{proof}

We analyse another model by reducing it to Model~\ref{def_pref}.

\begin{model}
\label{def_butowsley}
Let $\delta \in (-1,\infty)$, $p\in[0,1]$ and let 
$(X_t)_{t\in\N}$ be a sequence of $\N$-valued random variables.
We consider a growing undirected graph $(G_t)_{t=0}^{\infty}$ as follows.
$G_0$ is an arbitrary connected graph with at least one edge.
At each time-step $t\in\N$, we apply 
exactly one of the following operations: operation (a) with probability $p$ and operation (b) with probability $1-p$.
\begin{enumerate}[(a)]
\item We sample $X_t$ vertices independently using $\rho_{\delta}$, then we add a new vertex $v$ and join it to the sampled vertices.
\item We sample $2X_t$ vertices $W_1,W'_1,\dots,W_{X_t},W'_{X_t}$ independently using $\rho_{\delta}$,
then we add the edges $W_1W'_1, \dots,W_{X_t}W'_{X_t}$.
\end{enumerate}
\end{model}

Model~\ref{def_butowsley} is a generalization of the \emph{generalized linear preference model} of Bu and Towsley~\cite{bu_towsley}, in which $X_t=d$ for all $t$, where $d$ is a fixed positive integer.
Theorem~\ref{thm:remark} below gives that if $\delta$ is rational and nonnegative then a.a.s.\ the generalized linear preference model has diameter at most $(4+2\delta/d)e\log n + O(1)$.

\begin{remark}
Model~\ref{def_butowsley} with $p=1$ and $X_t$ being a constant independent of $t$ and $n$ is called the \emph{linear preference model}, whose diameter has been studied extensively.
Assume that $X_t=d$ for all $t$, where $d\in\N$ is fixed.
If $d=1$ and $\delta\ge0$, Pittel~\cite{random_recursive_trees} showed the diameter is $\Theta(\log n)$.
If $d>1$ and $\delta\in(-d,0)$,  the diameter is
$\Theta(\log \log n)$ as proved by
Dommers, van der Hofstad, and Hooghiemstra~\cite{diameters_pa,vander}.
If $d>1$ and $\delta=0$,  the diameter is
$\Theta (\log n / \log \log n)$, see Bollob{\'a}s and Riordan~\cite{diameter_preferential_attachment}.
Finally, if $d>1$ and $\delta>0$,  the diameter is $\Theta(\log n)$~\cite{diameters_pa,vander}.

Chung and Lu~\cite{coupling_online_offline} studied a variation of Model~\ref{def_butowsley} with the following differences:
the process is conditioned on generating a graph with no multiple edges or loops;
$X_t=d$ for all $t$, where $d$ may depend on $n$;
there are two additional operations: in the first one, a vertex is sampled uniformly and deleted, 
and in the second one, $X_t$ edges are sampled uniformly and deleted.
They proved that if $d > \log ^{1+\Omega(1)} n$,
then a.a.s.\ the evolving graph has diameter $\Theta(\log n)$,
where $n$ is the number of vertices.
\end{remark}

\begin{theorem}
\label{thm:remark}
Suppose that $\delta=r/s$, where $r\in\Nz$ and $s\in\N$,
and suppose that $\ell=\ell(n),u=u(n)\in\N$ are such that 
$\ell \le X_t\le u$ for all $t$.
A.a.s.\ the diameter of $G_n$ generated by Model~\ref{def_butowsley} is at most 
$4e(u/\ell + \delta/(2\ell)) \log n + O(u)$.
\end{theorem}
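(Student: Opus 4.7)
The strategy is to reduce Model~\ref{def_butowsley} to the edge-tree machinery of Theorem~\ref{thm:pref} by building an auxiliary multigraph $\tilde{G}_t$ that absorbs the additive shift $\delta$ into the degrees. Define $\tilde{G}_t$ from $G_t$ by taking $2s$ parallel copies of every edge and placing $r$ loops at every vertex. Because a loop contributes $2$ to the degree, every vertex satisfies $\deg_{\tilde{G}_t}(v) = 2s\,\deg_{G_t}(v) + 2r$, and a one-line handshake calculation then shows that picking a uniform edge of $\tilde{G}_t$ and then a uniform endpoint produces a vertex distributed as $\rho_{\delta}$ on $G_t$. This identity is what converts the $\rho_\delta$ sampling of Model~\ref{def_butowsley} into the uniform-endpoint-of-uniform-edge sampling consumed by the edge tree.

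The next step is to construct a growing tree $(T_t)$ on node set $E(\tilde{G}_t) \cup \{\root\}$, mirroring the proof of Theorem~\ref{thm:pref}. At a type-(a) step the new vertex $v$ is joined in $G_t$ to $N_1,\ldots,N_{X_t}$, so $\tilde{G}_t$ acquires $2sX_t + r$ new edges, all incident to $v$; in $T_t$ we attach all of these new nodes to the uniform edge $\tilde{e}_1$ of $\tilde{G}_{t-1}$ whose sampled endpoint produced $N_1$. At a type-(b) step $\tilde{G}_t$ gains $2sX_t$ new edges, and for each $j$ we attach the $2s$ copies of $W_jW'_j$ to the uniform edge $\tilde{e}_j$ of $\tilde{G}_{t-1}$ whose sampled endpoint produced $W_j$. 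An inductive argument identical to that in Theorem~\ref{thm:pref} yields $\depth(e,\tilde{G}_t)\le 2\,\depth(e,T_t)$ for every edge $e$ of $\tilde{G}_t$; this also bounds depths in $G_t$, since duplicating edges and adding loops does not change shortest paths.

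Finally, bound the height of $T_n$ using Lemma~\ref{lem:multichildren}. The number of new nodes added per time-step lies in $[2s\ell,\,2su+r]$, and although the $2s$ siblings stemming from the same new edge of $G_t$ share a parent, each new node individually has a parent distributed uniformly over $E(\tilde{G}_{t-1})$, which is exactly the marginal condition Lemma~\ref{lem:multichildren} requires. The lemma yields that a.a.s.\ the height of $T_n$ is at most
\[
\frac{2su+r}{2s\ell}\, e\log n + 2(2su+r)\,e + O(1) = e\left(\frac{u}{\ell}+\frac{\delta}{2\ell}\right)\log n + O(u),
\]
with the additive terms in $s$ and $r$ absorbed into $O(u)$ because $\delta$ is a fixed parameter of the model. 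Since the diameter of $G_n$ is at most $4$ times the height of $T_n$, the claimed bound $4e(u/\ell + \delta/(2\ell))\log n + O(u)$ follows.

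The main obstacle is that operation (b) does not fit verbatim into a single edge operation of Model~\ref{def_pref}: each new edge of $G_t$ becomes $2s$ parallel copies in $\tilde{G}_t$, but Model~\ref{def_butowsley} only supplies two $\rho_\delta$ samples for it. Rather than fabricating extra uniform samples, which would force the new edges to have undesired endpoints, we exploit the slack already built into Lemma~\ref{lem:multichildren}: only marginal uniformity of the parents is required, so we may freely group the $2s$ copies as siblings of a single uniformly chosen parent in $T_t$.
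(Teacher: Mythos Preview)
Your proof is correct and follows essentially the same approach as the paper: both construct the auxiliary multigraph (the paper calls it $\widehat{G}_t$) by taking $2s$ copies of each edge and $r$ loops at each vertex, then bound the diameter via the edge tree and Lemma~\ref{lem:multichildren}. The paper simply invokes Theorem~\ref{thm:pref} with $A_t = 2sX_t + r$ for operation~(a) and $B_t = 2sX_t$ for operation~(b); you instead unroll the edge-tree construction and explicitly address the fact that the $2s$ parallel copies at a type-(b) step share a parent---a subtlety the paper glosses over but which is indeed harmless, since Lemma~\ref{lem:multichildren} requires only marginal uniformity of the parents.
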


\begin{proof}
For $t\in\Nz$, let $\widehat{G}_{t}$ be the graph obtained from $G_t$ by copying each edge $2s-1$ times,
and adding $r$ loops at each vertex.
So $\widehat{G}_{t}$ has $2s |E(G_t)| + r |V(G_t)|$ edges.
Note that the diameters of $G_t$ and $\widehat{G}_{t}$ are the same.
We claim that $(\widehat{G}_t)_{t=0}^{\infty}$ grows as described in Model~\ref{def_pref}.
First, sampling a vertex of $G_{t-1}$ using $\rho_{\delta}$ corresponds to choosing a random endpoint of a random edge of $\widehat{G}_{t-1}$.
Second, applying operation (a) corresponds to applying only a vertex operation with $A_{t}=2sX_t+r$.
Second, applying operation (b) corresponds to applying only an edge operation with $B_{t} = 2sX_t$.
By Theorem~\ref{thm:pref}, a.a.s\ the diameter of $\widehat{G}_{n}$ is at most $4e(u/\ell + \delta/(2\ell)) \log n + (16esu + 8er)+O(1)$, completing the proof.
\end{proof}

We now analyse the preferential attachment with random initial degrees (PARID) model of Deijfen, van den Esker, van der Hofstad, and Hooghiemstra~\cite[Section~1.1]{randominitial} by reducing it to Model~\ref{def_butowsley}.

\begin{model} 
\label{def_parid}
Let $\{X_t: t\in\N\}$ be a sequence of i.i.d.\ $\N$-valued random variables and let $\delta$ be a fixed number such that almost surely $X_1 + \delta > 0$.
We consider a growing undirected graph $(G_t)_{t=0}^{\infty}$ as follows.
$G_0$ is an arbitrary connected graph with at least one edge.
At each time-step $t\in\N$, $G_t$ is obtained from $G_{t-1}$ by sampling
$X_t$ vertices $N_1,\dots,N_{X_t}$ independently using $\rho_{\delta}$
and adding one new vertex $v$ and $X_t$ new edges $vN_1,\dots,vN_{X_t}$.
\end{model}

Note that $X_t$ is the (random) initial degree of the vertex born at time $t$.
The following theorem implies that if the initial degrees' distribution in the PARID model has an exponential decay (e.g.\ if it is the Poisson or the geometric distribution),
and $\delta$ is positive and rational,
then a.a.s.\ the generated graph has a polylogarithmic diameter.

\begin{theorem}
\label{thm:parid}
Assume that $\delta$ is a positive rational number
and that $\ell=\ell(n)$ and $u=u(n)$ are positive integers such that 
$\p{X_1 \notin[\ell,  u]} = o(1/n)$.
A.a.s.\ the diameter of $G_n$ generated by Model~\ref{def_parid} is at most 
$4e(u/\ell + \delta/(2\ell)) \log n + O(u)$.
\end{theorem}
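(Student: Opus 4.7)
The plan is to reduce Theorem~\ref{thm:parid} to Theorem~\ref{thm:remark}. Observing the two model descriptions side by side, Model~\ref{def_parid} coincides exactly with Model~\ref{def_butowsley} in the special case $p=1$: at every time-step we apply operation (a), sampling $X_t$ vertices independently according to $\rho_\delta$ and joining them to a new vertex. Since $\delta$ is positive and rational, the hypothesis on $\delta$ in Theorem~\ref{thm:remark} is satisfied. The only remaining issue is that Theorem~\ref{thm:remark} requires the \emph{deterministic} bound $\ell \le X_t \le u$ for every $t$, whereas in Model~\ref{def_parid} we only know $\p{X_1 \notin [\ell,u]} = o(1/n)$.

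I would handle this with a standard truncation and coupling. Let $E$ denote the event $\{X_t \in [\ell,u] \text{ for all } t \in [n]\}$; by the union bound and the i.i.d.\ assumption on $(X_t)_{t\in\N}$,
$$\p{E^c} \le n \cdot \p{X_1 \notin [\ell,u]} = n \cdot o(1/n) = o(1).$$
Define truncated variables $X'_t := X_t$ if $X_t \in [\ell,u]$ and $X'_t := \ell$ otherwise, so that $\ell \le X'_t \le u$ holds deterministically. Generate a coupled growing graph $(G'_t)_{t=0}^{\infty}$ from Model~\ref{def_butowsley} with $p=1$ using the variables $X'_t$ together with the same underlying randomness (the $\rho_\delta$-samples) used to generate $(G_t)$. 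On the event $E$ we have $X'_t = X_t$ for every $t \in [n]$, so the two processes evolve identically up to time $n$ and in particular $G'_n = G_n$.

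Since $(X'_t)$ deterministically satisfies $\ell \le X'_t \le u$, Theorem~\ref{thm:remark} applies to $(G'_t)$ and yields that a.a.s.\ the diameter of $G'_n$ is at most $4e(u/\ell + \delta/(2\ell))\log n + O(u)$. Combining this with $\p{E} = 1 - o(1)$, a.a.s.\ both $E$ occurs and the diameter bound for $G'_n$ holds, hence the same bound holds a.a.s.\ for $G_n$. The argument is essentially routine once one recognizes Model~\ref{def_parid} as a special case of Model~\ref{def_butowsley}; the only mild obstacle is handling the potentially unbounded $X_t$ via truncation, which is straightforward because Theorem~\ref{thm:remark} places no independence or distributional assumptions on the $X_t$.
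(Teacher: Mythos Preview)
Your proposal is correct and follows essentially the same approach as the paper: observe that Model~\ref{def_parid} is Model~\ref{def_butowsley} with $p=1$, use the union bound on the i.i.d.\ $X_t$ to get $\ell\le X_t\le u$ for all $t\in[n]$ a.a.s., and then invoke the argument of Theorem~\ref{thm:remark}. The paper's proof is two sentences to the same effect; your explicit truncation $X'_t$ and coupling are a slightly more careful way of saying ``condition on the high-probability event and repeat the proof of Theorem~\ref{thm:remark},'' but there is no substantive difference.
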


\begin{proof}
Since
$\p{X_1 \notin[\ell,  u]} = o(1/n)$
and the $X_i$ are i.i.d.,
a.a.s.\ we have $\ell \le X_t \le u$ for all
$t\in[n]$.
The rest of the proof is the same as that of Theorem~\ref{thm:remark}, where all operations are of type (a).
\end{proof}

\subsection{A directed model}

In this section we study a directed analogous of Model~\ref{def_acld},
which is also a generalization of a model of Aiello et al.~\cite{acl_model}.
Sampling probabilities in this model depend on vertices' out-degrees and in-degrees, as defined below.

\begin{definition}[$\rho^{out},\rho^{in}$]
For a directed graph $G$ and a real number $\delta$, we define the functions $\rho_{\delta}^{out},
\rho_{\delta}^{in}:V(G)\to\mathbb{R}$ as 
$$\rho^{out}_{\delta}(v) = \frac{\outdeg(v)+\delta}{\sum_{u\in V(G)} (\outdeg(u)+\delta)}$$ and
$$\rho^{in}_{\delta}(v) = \frac{\indeg(v)+\delta}{\sum_{u\in V(G)} (\indeg(u)+\delta)} \:.$$
Here $\outdeg(v)$ and $\indeg(v)$ denote the out-degree and the in-degree of vertex $v$, respectively.
\end{definition}

\begin{model}
\label{def_aclc}
Let $\{X_t,Y_t,Z_t,Q_t : t\in\N\}$ be sequences of $\Nz$-valued random variables
satisfying $X_t+Y_t>0$ for all $t$.
We consider a growing directed graph $(G_t)_{t=0}^{\infty}$ as follows.
$G_0$ is an arbitrary weakly connected directed graph with at least one edge.
At each time-step $t\in\N$, we perform the following operations:
\begin{enumerate}
\item
We sample $X_t$ vertices $x_1,\dots,x_{X_t}$ independently using $\rho_0^{out}$
and $Y_t$ vertices $y_1,\dots,y_{Y_t}$ independently using $\rho_0^{ in}$.
\item
We sample $Z_t$ vertices $w_1,w_2,\dots,w_{Z_t}$ independently using $\rho_0^{out}$, and we sample 
$Z_t$ vertices $w'_1,w'_2,\dots,w'_{Z_t}$ independently using $\rho_0^{in}$,
\item
We add a new vertex $v$, and then we add the directed edges 
$w_1w'_1,\dots,w_{Z_t}w'_{Z_t}$,
$x_1v,\dots,x_{X_t}v$,
$vY_1,\dots,vY_{Y_t}$.
We also add $Q_t$ loops at $v$.
\end{enumerate}
\end{model}

Model~\ref{def_aclc} generalizes of~\cite[Section~2.1, Model C]{acl_model},
which has bounded $X_t,Y_t,Z_t,Q_t$.
The following theorem implies that a.a.s.\ the diameter of the latter model is $O(\log n)$.

\begin{theorem}
Let $\ell=\ell(n),u=u(n)$ be positive integers such that  $\ell \le X_t+Y_t+Z_t+Q_t\le u$ for all $t\in\N$.
A.a.s.\ the diameter of $G_n$ generated by Model~\ref{def_aclc} is at most $4 e(u/\ell) \log n + 8eu+O(1)$.
\end{theorem}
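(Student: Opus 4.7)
The plan is to show that, once we pass to the underlying undirected multigraph $\overline{G}_t$, the sequence $(\overline{G}_t)_{t=0}^{\infty}$ evolves as an instance of Model~\ref{def_pref}, whereupon Theorem~\ref{thm:pref} delivers the claimed bound. This suffices because the diameter of a directed graph is defined in this paper as the diameter of its underlying undirected graph, and $|E(G_t)| = |E(\overline{G}_t)|$.

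The key observation is that in $\overline{G}_{t-1}$, both $\rho_0^{out}$ and $\rho_0^{in}$ reduce to the single primitive used by Model~\ref{def_pref}: \emph{pick a uniformly random edge, then pick one of its endpoints}. Indeed, $\rho_0^{out}(v) = \outdeg(v)/|E(G_{t-1})|$, so drawing from $\rho_0^{out}$ is exactly choosing a uniform directed edge and taking its tail, and $\rho_0^{in}$ returns its head. In either case, viewed in $\overline{G}_{t-1}$, we have chosen a uniform edge and one of its endpoints.

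With this observation in hand, I match the step-3 operations of Model~\ref{def_aclc} to Model~\ref{def_pref}. The edges added at $v$ -- the $X_t$ edges $x_i v$, the $Y_t$ edges $v y_j$, and the $Q_t$ loops at $v$ -- together constitute a single \emph{vertex operation} with $A_t = X_t + Y_t + Q_t$. The hypothesis $X_t + Y_t > 0$ ensures $A_t > 0$, and the first sample (say $x_1$ if $X_t \ge 1$, else $y_1$) plays the role of the ``chosen endpoint of the uniformly sampled edge'' required by Model~\ref{def_pref}; the remaining $A_t - 1$ endpoints, though in fact sampled from $\rho_0^{out}$ or $\rho_0^{in}$, are treated as the ``arbitrary'' endpoints the model permits. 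The $Z_t$ edges $w_j w'_j$ between old vertices constitute an \emph{edge operation} with $B_t = Z_t$: each $w_j$ is the chosen endpoint of a uniform edge (via $\rho_0^{out}$), and $w'_j$ is arbitrary.

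Since $A_t + B_t = X_t + Y_t + Z_t + Q_t \in [\ell,u]$ by hypothesis, Theorem~\ref{thm:pref} gives the diameter bound $4e(u/\ell)\log n + 8eu + O(1)$ for $\overline{G}_n$, which equals the diameter of $G_n$. I do not anticipate any real obstacle: the argument is a straightforward directed analogue of the reduction used for Model~\ref{def_acld}. The only bookkeeping that needs care is designating, in each vertex operation, a single one of the $X_t + Y_t$ sampled endpoints as the ``anchor'' that couples to the edge tree -- this is routine given $X_t + Y_t > 0$.
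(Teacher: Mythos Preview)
Your proposal is correct and follows essentially the same route as the paper: reduce to the underlying undirected graph, observe that sampling via $\rho_0^{out}$ (respectively $\rho_0^{in}$) amounts to choosing a uniform edge and taking its tail (respectively head), and then match the step to a vertex operation with $A_t=X_t+Y_t+Q_t$ and an edge operation with $B_t=Z_t$ in Model~\ref{def_pref}, so that Theorem~\ref{thm:pref} applies with $A_t+B_t\in[\ell,u]$. Your write-up is in fact more detailed than the paper's, which dispatches the reduction in three sentences.
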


\begin{proof}
We claim that the underlying undirected graph of $(G_t)_{t=0}^{\infty}$ grows as described in Model~\ref{def_pref}.
Sampling a vertex using $\rho_{0}^{out}$ and $\rho_{0}^{in}$ correspond to choosing the tail and the head of a random edge, respectively.
The operations of Model~\ref{def_acld} correspond to
applying a vertex operation with
$A_t = X_t+Y_t+Q_t$
and an edge operation with $E_t = Z_t$.
By Theorem~\ref{thm:pref}, a.a.s.\ the diameter of $G_{n}$ is at most $4 e(u/\ell) \log n + 8eu+O(1)$, as required.
\end{proof}

\section{Directed scale-free graphs: dummy edges}
\label{sec:directed}
We study two directed models in this section.
In contrast to the previous directed model (Model~\ref{def_aclc}), 
in models considered here, the constant term in the definition of attachment probabilities ($\delta$ in Model~\ref{def_butowsley}) can be different for in-degrees and out-degrees.
We handle this issue by introducing dummy edges whose role is just to adjust the attachment probabilities 
(similar to, but more complicated than, what we did in the proof of Theorem~\ref{thm:remark}).
As in Section~\ref{sec:pref}, we first define a general model (Model~\ref{def_directed}) with a lot of flexibility and prove that a.a.s.\ it has a logarithmic diameter, and then reduce Model~\ref{def_directed_scale_free} (which is a generalization of the so-called `directed scale-free graphs') to that.

\begin{definition}[generalized directed graph]
In a directed graph, each edge has a tail and a head. 
A \emph{generalized directed graph} is a directed graph some of whose edges do not have a head or a tail.
Edges of such a graph are of three type:
\emph{tailless} edges have a head but do not have a tail,
\emph{headless} edges have a tail but do not have a head,
and \emph{proper} edges have a tail and a head.
A \emph{headed} edge is one that is not headless,
and a \emph{tailed} edges is one that is not tailless.
\end{definition}

The following model is a directed analogous of Model~\ref{def_pref}.

\begin{model}
\label{def_directed}
Let $(A_t,B_t,C_t,D_t,E_t)_{t=1}^{\infty}$ be sequences of $\Nz$-valued random variables.
We consider a growing generalized directed graph $(G_t)_{t=0}^{\infty}$ as follows.
$G_0$ is an arbitrary weakly connected generalized directed graph with at least one edge.
At each time-step $t\in\N$, $G_t$ is obtained from $G_{t-1}$ by performing a vertex operation and an edge operation, as defined below.

In a \emph{vertex operation}, if $A_t+B_t>0$, a new vertex $v$ is born and $A_t+B_t+C_t+D_t$ edges are added in the following manner:
\begin{description}
\item[Case 1:] If $A_t>0$, we sample a headed edge from $G_{t-1}$ uniformly and add a proper edge from $v$ to its head.
Then we add $A_t - 1$ new proper edges, tailed at $v$ and headed at arbitrary vertices of $G_{t-1}$.
Then $B_t$ proper edges are added, tailed at arbitrary vertices of $G_{t-1}$ and headed at $v$.
Then $C_t$ headless edges tailed at $v$, and $D_t$ tailless edges headed at $v$ are added.

\item[Case 2:] If $A_t=0$, we sample a tailed edge from $G_{t-1}$ uniformly and add a proper edge from its tail to $v$.
Then $B_t-1$ new proper edges are added from arbitrary vertices of $G_{t-1}$ to $v$.
Then $C_t$ headless edges tailed at $v$, and $D_t$ tailless edges headed at $v$ are added.
\end{description}
If $A_t+B_t=0$, then we do nothing in the vertex operation.

In an \emph{edge operation}, we independently sample $E_t$ tailed edges from $G_{t-1}$ uniformly, then we add $E_t$ proper edges, joining the tails of the sampled edges to arbitrary vertices of $G_{t-1}$.
\end{model}

Note that we do not require any independence for $(A_t,B_t,C_t,D_t,E_t)_{t\in\N}$. In particular they can be correlated and can depend on the past and the future of the process.

\begin{theorem}
\label{thm:directed}
Let $\ell=\ell(n),u=u(n) \in \N$ be such that 
$\ell\le A_t + B_t+E_t$ and 
$A_t + B_t +C_t+D_t+E_t\le u$
for every $t\in\N$.
A.a.s.\ the graph $G_n$ generated by Model~\ref{def_directed} has diameter at most $4e(u/\ell) \log n + 8eu + O(1)$.
\end{theorem}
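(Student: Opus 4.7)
Following the blueprint of Theorem~\ref{thm:pref}, the plan is to build a growing edge tree $(T_t)_{t=0}^{\infty}$ with $V(T_t) = E(G_t) \cup \{\aleph\}$ and $T_0$ a breadth-first spanning tree of the line graph of an edge-augmented $G_0$, exactly as in the proof of Theorem~\ref{thm:pref}, so that $\depth(e, T_0) = \depth(e, G_0)$ for every $e \in E(G_0)$. At each subsequent time step $t$, I will add the newborn edges to $T_t$ as follows: all $A_t + B_t + C_t + D_t$ edges produced by the vertex operation (in either Case~1 or Case~2) will be attached as children of the single edge $e_1$ sampled at step~1 of that operation, and each of the $E_t$ new proper edges $w_j x_j$ produced by the edge operation will be attached as a child of the corresponding sampled tailed edge $e_j$.

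I will then prove the inductive claim $\depth(e, G_t) \le 2\,\depth(e, T_t)$ for every $e \in E(G_t)$ by following Theorem~\ref{thm:pref} almost verbatim. The one new ingredient is that the sampled $e_1$ or $e_j$ may be a dummy edge (headless or tailless); but such an edge has a unique endpoint $w$, which satisfies $\depth(w, G_t) = \depth(e, G_t) - 1$ by the very definition of edge depth, and this is no larger than the bound $\depth(w, G_t) \le \depth(e, G_t)$ used for proper edges in Theorem~\ref{thm:pref}. Going through Case~1, Case~2, and the edge operation in turn, each combined with the three possibilities (proper, headless, or tailless) for the sampled edge type, then reproduces the bookkeeping of Theorem~\ref{thm:pref} line by line, and is in fact tighter in the dummy subcases.

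The main obstacle will be that Lemma~\ref{lem:multichildren} cannot be invoked as a black box here, because in Case~1 the parent of a new node is uniform over only the headed edges of $G_{t-1}$, and in Case~2 or the edge operation it is uniform over only the tailed edges. To get around this I will revisit the proof of Lemma~\ref{lem:multichildren} and observe that it uses only two quantitative ingredients: an upper bound on the number of children born per step, and an upper bound on the marginal probability that any fixed existing node is chosen as the parent of any newborn node. The former is $A_t + B_t + C_t + D_t + E_t \le u$. For the latter, since every time step adds at least $\ell$ proper edges (because $A_t + B_t + E_t \ge \ell$ in all cases) and a proper edge is simultaneously headed and tailed, writing $H_t$ and $\mathit{Tl}_t$ for the number of headed and tailed edges of $G_t$ gives $\min\{H_{t-1}, \mathit{Tl}_{t-1}\} \ge n_0' + \ell(t-1)$ for some positive constant $n_0'$ depending only on $G_0$. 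Consequently the marginal attachment probability is bounded by $1/(n_0' + \ell(t-1))$, which is exactly the role played by $1/(n_0 + \ell(t_k - 1))$ in the proof of Lemma~\ref{lem:multichildren}; Stirling's formula and the choice $h = (u/\ell)\,e\log n + 2ue$ then yield $\height(T_n) \le (u/\ell)\,e\log n + 2ue + O(1)$ a.a.s., and hence the desired diameter bound $4e(u/\ell)\log n + 8eu + O(1)$ for $G_n$.
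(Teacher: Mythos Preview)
Your proposal is correct and follows essentially the same approach as the paper's own proof: the same edge-tree construction, the same attachment rules in Cases~1 and~2 and in the edge operation, the same inductive inequality $\depth(e,G_t)\le 2\,\depth(e,T_t)$, and the same workaround for Lemma~\ref{lem:multichildren} via the observation that each step adds at least $\ell$ proper edges (which are both headed and tailed), so the sampling pool always has size at least $n_0'+\ell(t-1)$. The only difference is cosmetic: the paper defines the depth of headless and tailless edges up front and runs the inductive chain uniformly, rather than splitting into proper/dummy subcases as you suggest, but the content is identical.
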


\begin{proof}
The argument is similar to that of Theorem~\ref{thm:pref}.
We define the depth of a headless edge as one plus the depth of its tail,
and
the depth of a tailless edge as one plus the depth of its head,
and the {depth} of a proper edge $uv$  as $1+\min\{\depth(u),\depth(v)\}$.
We inductively define a growing undirected tree $(T_t)_{t=0}^{\infty}$ such that for all $t\in\Nz$,
$V(T_t) = E(G_t) \cup \{\root\}$.
Here $\root$ denotes the root of $T_t$, which has depth 0.
We prove by induction that for all $e\in E(G_t)$,
$\depth(e, G_t) \le 2 \depth(e,T_t)$.
Let $H$ be the graph obtained from the underlying undirected graph of $G_0$ by adding an edge labelled $\root$ incident to its root.
Let $T_0$ be a breadth-first tree of the line graph of $H$ rooted at $\root$. 
Note that $\depth(\root,T_0)=0$ and 
$\depth(e,T_0)=\depth(e,G_0)$ for every $e\in E(G_0)$.

Given $T_{t-1}$, we define $T_t$ and prove the inductive step.
First, consider a vertex operation, Case 1.
Let $v$ be the new vertex and $e_1$ be the sampled headed edge.
Notice that $\depth(v,G_t) \le \depth(e_1,G_t)+1$.
In $T_t$, we join the $A_t+B_t+C_t+D_t$ new nodes (new edges of $G_t$) to $e_1$.
For any such edge $e$ we have
\begin{align*}
\depth(e,G_t)
\le \depth(v,G_t) + 1
\le \depth(e_1,G_t) + 2
\le 2 \depth(e_1,T_t) + 2
= 2 \depth(e,T_t),
\end{align*}
where we have used the inductive hypothesis for $e_1$ in the third inequality.

Second, consider a vertex operation, Case 2.
Let $v$ be the new vertex and let $e_1$ be the sampled tailed edge.
Notice that $\depth(v,G_t) \le \depth(e_1,G_t)+1$.
In $T_t$, we join the $B_t+C_t+D_t$ new nodes (new edges of $G_t$) to $e_1$.
For any such edge $e$ we have
\begin{align*}
\depth(e,G_t)
\le \depth(v,G_t) + 1
\le \depth(e_1,G_t) + 2
\le 2 \depth(e_1,T_t) + 2
= 2 \depth(e,T_t).
\end{align*}

Third, consider an edge operation.
Let $e_1,\dots,e_{E_t}$ be the sampled tailed edges, and denote by $w_1,w_2,\dots,w_{E_t}$ their tails.
For each $j\in[E_t]$, in $G_t$ we join $w_j$ to a vertex of $G_{t-1}$, say $x_j$.
In $T_t$, we join the new node $w_j x_j$ to $e_j$.
We have
\begin{align*}
\depth(w_j x_j,G_t)
\le \depth(w_j,G_t) + 1
& \le \depth(e_j,G_t) + 1 \\
& \le 2 \depth(e_j,T_t) + 1
= 2 \depth(w_j x_j,T_t) - 1 \:,
\end{align*}
where we have used the fact that $w_j$ is incident with $e_j$ for the second inequality, and the 
inductive hypothesis for $e_j$ in the third inequality.
Hence for all $e\in E(G_t)$, we have $\depth(e, G_t) \le 2 \depth(e,T_t)$, as required.
To complete the proof, it suffices to show that a.a.s.\ the height of $T_n$ is at most $(u/\ell) e\log n + 2ue+O(1)$.

The argument is similar to that for Lemma~\ref{lem:multichildren}.
Note that at any time $t$, graph $G_t$ has at least $|V(T_0)| +\ell  t$ proper edges.
Let $n_0=|V(T_0)|$.
For a given $h = h(n)$, we bound the probability that $T_n$ has a node at depth exactly $h+n_0$.
Given a sequence $1 \le t_1 < \dots < t_h\le n$,
the probability that there exists a path $v_{t_1}v_{t_2}\dots v_{t_h}$ in $T_n$ with $v_{t_j}$ born at time $t_j$ is at most
$$
u^h \prod_{k=2}^{h}\frac{1}{n_0+ \ell \cdot (t_k-1)} \:,
$$
since there are at most $u^h$ choices for 
$(v_{t_1},\dots,v_{t_h})$, and
for each $k=2,\dots,h$, when $v_{t_k}$ is born, there are at least $n_0+ \ell \cdot (t_k-1)$ nodes available for it to join to (corresponding to the proper edges of $G_{t_k-1}$).
By the union bound, the probability that $G_n$ has a node at depth $h+n_0$ is at most
\begin{align*}
u^h \sum_{1 \le t_1 <t_2< \dots < t_h\le n}\left( \prod_{k=2}^{h}\frac{1}{n_0+\ell\cdot(t_k-1)}\right)
& < \frac{u^h}{h!}\left(1+\sum_{j=1}^{n-1}\frac{1}{n_0+\ell j}\right)^h \\
& 
< \left(\frac{u e}{h} 
\cdot \left( \frac{\log n}{\ell} + 2\right) \right)^h
\bigg/\sqrt{2\pi h} \:.
\end{align*}
Putting $h \ge (u/\ell) e\log n + 2ue$  makes this probability $o(1)$.
Hence a.a.s.\ the height of $T_n$ is less than 
$ (u/\ell) e\log n + 2ue+O(1)$, as required.
\end{proof}

The following model is a directed analogous of Model~\ref{def_butowsley}.

\begin{model}
\label{def_directed_scale_free}
Let $p_a,p_b,p_c$ be nonnegative numbers summing to 1,
and let $\alpha,\beta \in [0,\infty)$.
Let
$(X_t)_{t\in\N}$ be a sequence of $\N$-valued random variables.
We consider a growing directed graph $(G_t)_{t=0}^{\infty}$ as follows.
$G_0$ is an arbitrary weakly connected directed graph.
At each time-step $t\in\N$, we perform exactly one of the following three operations,
with probabilities $p_a,p_b$, and $p_c$, respectively.
\begin{enumerate}
\item[(a)]
We sample $X_t$ vertices from the existing graph, independently using $\rho_{\alpha}^{in}$.
Then we add a new vertex and join it to the sampled vertices.
\item[(b)]
We sample $X_t$ vertices from the existing graph, independently using $\rho_{\beta}^{out}$.
Then we add a new vertex and join the sampled vertices to it.
\item[(c)]
We sample $X_t$ vertices
$w_1,\dots,w_{X_t}$ independently using $\rho_{\beta}^{out}$,
and we sample $X_t$ vertices
$w'_1,\dots,w'_{X_t}$ independently using $\rho_{\alpha}^{in}$.
Then we add the edges 
$w_1w'_1$, $\dots$, $w_{X_t}w'_{X_t}$.
\end{enumerate}
\end{model}

Model~\ref{def_directed_scale_free} is a generalization of \emph{directed scale-free graphs}
of Bollob{\'a}s, Borgs, Chayes, and Riordan~\cite[Section~2]{directed_scale_free}, which has $X_t=1$ for all $t$.
The following theorem implies that if $\alpha$ and $\beta$ are rational, then a.a.s.\ the diameter of the latter model is at most 
$4 e (1+\alpha+\beta)  \log n + O(1)$.

\begin{theorem}
Suppose that $\alpha=r/s$ and $\beta=q/s$ with $r,q\in\Nz$ and $s\in\N$.
Also suppose that $\ell=\ell(n),u=u(n)\in\N$ are such that $\ell \le X_t\le u$ for all $t$.
A.a.s.\ the diameter of $G_n$ generated by Model~\ref{def_directed_scale_free} is at most 
$4 e (u+\alpha+\beta)  \log n /\ell + O(u)$.
\end{theorem}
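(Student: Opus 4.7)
The plan is to imitate the reduction carried out in Theorem~\ref{thm:remark}: given $G_t$, I will construct an auxiliary generalized directed graph $\widehat{G}_t$ whose underlying undirected graph has the same diameter as that of $G_t$, show that $(\widehat{G}_t)_{t\ge 0}$ evolves as in Model~\ref{def_directed}, and then invoke Theorem~\ref{thm:directed}. The rationality assumption $\alpha=r/s$, $\beta=q/s$ is exactly what allows the dummy-edge trick to implement the constants $\alpha$ and $\beta$ inside a uniform-edge-sampling framework.

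Define $\widehat{G}_t$ from $G_t$ by replacing each proper edge with $s$ parallel copies, attaching $r$ tailless edges headed at each vertex, and attaching $q$ headless edges tailed at each vertex. If $G_t$ has $m$ proper edges and $N$ vertices, then $\widehat{G}_t$ has $sm+rN=s(m+\alpha N)$ headed edges, and the number of headed edges whose head is $v$ equals $s(\indeg_{G_t}(v)+\alpha)$. Thus picking a uniformly random headed edge of $\widehat{G}_t$ and returning its head realizes the distribution $\rho^{in}_\alpha$ on $V(G_{t})$; symmetrically, picking a uniform tailed edge of $\widehat{G}_t$ and returning its tail realizes $\rho^{out}_\beta$. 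Since all the extra edges are multiplicities or half-edges that attach to a single vertex, the diameter of $G_t$ coincides with that of the underlying undirected graph of $\widehat{G}_t$.

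It remains to translate each of operations (a), (b), (c) into a vertex/edge operation of Model~\ref{def_directed}. Operation (a) adds a new vertex $v$ together with $sX_t$ proper edges tailed at $v$ (the $s$ copies of each of the $X_t$ new edges), plus the $r$ tailless and $q$ headless dummy edges that must accompany the new vertex; this fits Case~1 with $A_t=sX_t$, $B_t=0$, $C_t=q$, $D_t=r$, $E_t=0$, since the first new proper edge's head is exactly a uniform headed-edge sample, and the remaining heads are allowed to be arbitrary (here, correlated copies of the same sample). Operation (b) is the symmetric Case~2 with $A_t=0$, $B_t=sX_t$, $C_t=q$, $D_t=r$, $E_t=0$. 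Operation (c) adds no new vertex and contributes only an edge operation with $E_t=sX_t$. In every case, $A_t+B_t+E_t=sX_t\in[s\ell,su]$ and $A_t+B_t+C_t+D_t+E_t\le s(u+\alpha+\beta)$, so Theorem~\ref{thm:directed} applies with $\ell_{\mathrm{new}}=s\ell$ and $u_{\mathrm{new}}=s(u+\alpha+\beta)$, giving a diameter bound of $4e(u+\alpha+\beta)\log n/\ell+O(u)$ as claimed.

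I do not expect a serious obstacle: the argument is parallel to the undirected reduction of Theorem~\ref{thm:remark}, and the generality built into Model~\ref{def_directed} (the words ``arbitrary vertices'') absorbs the fact that heads within a single time-step are not independent but are $s$-fold repetitions of independent $\rho^{in}_\alpha$-samples. The one bookkeeping point worth verifying carefully is that, after adding the dummy edges, the sampling of heads/tails in the first new proper edge of each vertex operation really is uniform over all headed/tailed edges of $\widehat{G}_{t-1}$, including the previously planted dummy edges at old vertices; this is built into the counting above, since the denominators $s(m+\alpha N)$ and $s(m+\beta N)$ already include the dummy edges at every existing vertex.
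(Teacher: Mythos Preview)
Your proposal is correct and matches the paper's own proof essentially line for line: the same auxiliary graph $\widehat{G}_t$ (each edge replicated $s$ times, $r$ tailless and $q$ headless dummies at every vertex), the same identification of $\rho^{in}_\alpha$ and $\rho^{out}_\beta$ with uniform headed/tailed-edge sampling in $\widehat{G}_{t-1}$, and the same parameter choices $(A_t,B_t,C_t,D_t,E_t)$ for each of (a), (b), (c), followed by the invocation of Theorem~\ref{thm:directed} with $\ell_{\mathrm{new}}=s\ell$ and $u_{\mathrm{new}}=su+q+r=s(u+\alpha+\beta)$. Your closing remark that the ``arbitrary vertices'' clause absorbs the $s$-fold repetition of each sampled head/tail is exactly the point; the height argument inside Theorem~\ref{thm:directed} (like Lemma~\ref{lem:multichildren}) only uses uniform marginals, so the lack of independence among the $sX_t$ copies is harmless.
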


\begin{proof}
For $t\in\Nz$, let $\widehat{G}_{t}$ be the generalized directed graph obtained from $G_t$ by copying each edge $s-1$ times,
adding $r$ tailless edge at each vertex,
and adding $q$ headless edges at each vertex.
So $\widehat{G}_{t}$ has $s |E(G_t)| + (r+q) |V(G_t)|$ edges.
Note that the diameters of $G_t$ and $\widehat{G}_{t}$ are the same.
We claim that $(\widehat{G}_t)_{t=0}^{\infty}$ grows as described in Model~\ref{def_directed}.
First, sampling a vertex of $G_t$ using $\rho_{\beta}^{out}$ 
or
$\rho_{\alpha}^{in}$ 
correspond to choosing the tail or the head of a uniformly random tailed or headed edge of $\widehat{G}_t$, respectively.
Second, applying operation (a)  corresponds to applying only a vertex operation  with $A_{t}=sX_{t},B_{t}=0,C_t=q,D_t=r$.
Third, applying operation (b)  corresponds to applying only a vertex operation  with 
$A_{t}=0,B_t=sX_{t},C_t=q,D_t=r$.
Fourth, applying operation (c)  corresponds to applying only an edge operation  with 
$E_t=sX_t$. 
By Theorem~\ref{thm:directed}, a.a.s\ the diameter of $\widehat{G}_{n}$ is at most 
$4 e (u+\alpha+\beta)  \log n /\ell + 8e(su+q+r)+O(1)$, 
completing the proof.
\end{proof}

\section{The Cooper-Frieze model: multi-typed edge trees}
\label{combine}
In this section we study an undirected model that combines uniform and preferential attachment when choosing the neighbours of a new vertex.

\begin{model}
\label{def_cf}
Let $p_a,\dots,p_f$ be nonnegative numbers summing to 1
and satisfying $p_a+p_b>0$,
and let $(X_t)_{t\in\N}$ be a sequence of $\N$-valued random variables.
We consider a growing undirected graph $(G_t)_{t=0}^{\infty}$ as follows.
$G_0$ is an arbitrary connected graph.
At each time-step $t\in\N$, we perform exactly one of the following six operations, with probabilities $p_a,\dots,p_f$ and independently of previous choices.
\begin{enumerate} [(a)]
\item
$X_t$ vertices are sampled uniformly,
then a new vertex is born and is joined to the sampled vertices.
\item
$X_t$ vertices are sampled using $\rho_0$,
then a new vertex is born and is joined to the sampled vertices.
\item
$X_t+1$ vertices are sampled uniformly.
Then $X_t$ edges are added joining the first sampled vertex to the others.
\item
A vertex is sampled uniformly
and $X_t$ vertices are sampled using $\rho_0$.
Then $X_t$ edges are added joining the first sampled vertex to the others.
\item
A vertex is sampled using $\rho_0$
and $X_t$ vertices are sampled uniformly.
Then $X_t$ edges are added joining the first sampled vertex to the others.
\item
$X_t+1$ vertices are sampled using $\rho_0$.
Then $X_t$ edges are added joining the first sampled vertex to the others.
\end{enumerate}
Note that each operation increases the number of edges by $X_t$.
Again, we do not require any independence for $(X_t)_{t\in\N}$. 
\end{model}

Model~\ref{def_cf} is a generalization of a model defined by Cooper and Frieze~\cite[Section~2]{cooper-frieze-model},
in which the random variables $X_t$ are bounded.
The following theorem implies that a.a.s.\ the diameter of the latter model is $O(\log n)$.

\begin{theorem}
\label{thm:cooper-frieze}
Let $q = p_a + p_b$ and let $\ell=\ell(n),u=u(n)$ be positive integers such that $\ell \le X_t \le u$ for all $t$.
A.a.s.\ the diameter of $G_n$ generated by Model~\ref{def_cf} is at most 
$4(u/\ell+11/q) e\log n + 8e(u/\ell)+O(1)$.
\end{theorem}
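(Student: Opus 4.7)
The plan is to extend the edge-tree coupling of Theorem~\ref{thm:pref} to a \emph{two-typed} growing tree $(T_t)_{t\ge 0}$ whose node set is $V(G_t)\cup E(G_t)\cup\{\root\}$, partitioned into \emph{V-nodes} (one per vertex of $G_t$) and \emph{E-nodes} (one per edge of $G_t$). For each of the six operations I would specify how the newborn nodes of $T_t$ attach so that every newborn is joined to a uniformly sampled node of a single fixed type. Concretely: in op~(a), attach $\tilde v$ to $\tilde w_1$ and each $\widetilde{vw_i}$ to $\tilde w_i$ (all V-parents); in op~(b), attach $\tilde v$ to $\tilde e_1$ and each $\widetilde{vw_i}$ to $\tilde e_i$, where $e_i$ is the uniformly sampled edge used to pick $w_i$ via $\rho_0$ (all E-parents); in ops~(c),~(d),~(e), route each $\widetilde{u_0w_i}$ through a uniformly sampled vertex-node (for example $\tilde w_i$ in (c) and (e), $\tilde u_0$ in (d)); and in op~(f), through $\tilde e_i$ (E-parents). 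Taking $T_0$ to be a breadth-first-search tree of the line graph of $G_0$ augmented by a root edge as in Theorem~\ref{thm:pref}, a case analysis mirroring the one there proves by induction on $t$ that $\depth(x,G_t)\le 2\depth(\tilde x,T_t)$ for every $x\in V(G_t)\cup E(G_t)$, so the diameter of $G_n$ is at most $4\,\height(T_n)$.

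For the height bound I use two growth rates. Deterministically, $N_E(t):=|E(G_t)|\ge\ell t+O(1)$; and since a V-node is added precisely in op~(a) or~(b), $N_V(t):=|V(G_t)|$ stochastically dominates $\mathrm{Bin}(t,q)$, so the Chernoff bound gives $N_V(t)\ge qt/2$ uniformly for $t\ge t_0(n)$, a.a.s. I then adapt the union-bound computation of Lemma~\ref{lem:multichildren}: at each birth time $t_k$ there are at most $u+1$ candidates for the path-node $v_{t_k}$, and the probability that its parent is $v_{t_{k-1}}$ equals $1/N_{\tau_k}(t_k-1)$, where the parent-type $\tau_k\in\{V,E\}$ is determined by the random operation at time $t_k$. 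In the coupling $\tau_k=E$ iff that operation is~(b) or~(f); operations at distinct times are independent, so conditional on the birth times $(\tau_k)$ are i.i.d.\ Bernoulli. Taking the expectation over the operation sequence inside the product bounds each link-factor by $\Pr[\tau_k=E]/(\ell(t_k-1))+2\Pr[\tau_k=V]/(q(t_k-1))$, which has the form $(C_1/\ell+C_2/q)/(t_k-1)$ for absolute constants $C_1,C_2$. Summing over birth times using $\sum_{1\le t_1<\cdots<t_h\le n}\prod_{k=2}^h 1/(t_k-1)<(1+\log n)^h/h!$ and applying Stirling's formula exactly as in Lemma~\ref{lem:multichildren} yields a bound of the stated form $\height(T_n)\le(u/\ell+O(1)/q)e\log n+O(u)$ a.a.s., and the diameter bound follows by doubling.

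The hard step is this union-bound computation, which is why the authors flag it as the nontrivial case. In Lemma~\ref{lem:multichildren} there is a single linear-in-$t$ lower bound on the pool of available parents and the probability of a length-$h$ path factors cleanly; here the pool alternates between $N_V$ and $N_E$ along a path, and the choice at each step is itself random. The move that prevents the two growth rates $q$ and $\ell$ from multiplying (which would produce a spurious $1/(q\ell)$ term) is to push the expectation over operations inside the product so that each link-factor becomes the \emph{linear combination} $p_E/N_E+p_V/N_V$ of reciprocals rather than a worst-case maximum; this is exactly where the independence of operations across time-steps is essential. Once this reduction is in place, everything else --- the choice of $T_0$, the depth induction, and doubling to go from height to diameter --- follows the template of Theorems~\ref{thm:pref} and~\ref{thm:directed} and is routine bookkeeping, with the explicit constants $11/q$ and $8e(u/\ell)$ in the statement coming from careful tracking of the Chernoff-derived slack in $N_V\ge qt/2$ and of which operation assigns which parent-type.
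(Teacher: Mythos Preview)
Your overall plan---a two-typed tree with V- and E-nodes, the depth induction $\depth(f,G_t)\le 2\depth(f,T_t)$, and a path-counting height bound---matches the paper's. There are, however, two points where your execution diverges, and one of them is a genuine gap.

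First, a difference in the coupling. The paper attaches \emph{all} nodes born at time $t$ to a \emph{single} parent: the first uniformly sampled vertex $w$ in ops~(a),~(c),~(d), or the first sampled edge $e$ in ops~(b),~(e),~(f). Your coupling spreads the new nodes across several parents $\tilde w_i$ or $\tilde e_i$. Both give the depth bound, but the single-parent version makes the height computation cleaner: only one parent is chosen per time-step, so there is no $(u+1)^h$ union bound over ``candidates for $v_{t_k}$''; instead the factor $u$ appears once, as the numerator of the E-term (at most $u$ E-nodes are born at any one time, against at least $n_0+\ell(t_k-1)$ E-nodes in total). With your scheme the $(u+1)^h$ outside the product would multiply the $1/q$ term as well, yielding $O(u/q)$ rather than $O(1)/q$ in the height.

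Second, and this is the actual gap: your ``push the expectation over operations inside the product'' step does not go through as written. You first invoke Chernoff to get $N_V(t)\ge qt/2$ a.a.s., and then assert that the parent-types $(\tau_k)$ are i.i.d.\ Bernoulli. But the Chernoff event is an event about the operation sequence; conditioning on it destroys the independence of the operations, so you cannot afterwards factor the expectation over the $\tau_k$'s. The paper avoids this entirely. It conditions on the full V-count trajectory $(L(t))=g$ and then bounds each link-factor by
\[
\frac{u}{n_0+\ell(t_k-1)}+\frac{1}{g(t_k-1)},
\]
a bound that holds for \emph{every} operation type at time $t_k$---no expectation over $\tau_k$ is taken at all. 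Given $g$ (and any operation sequence consistent with it), the parent draws at distinct times are conditionally independent, so the product is legitimate. Note that this per-step factor is already of the additive shape $(u/\ell+2/q)/t$: your worry about a ``spurious $1/(q\ell)$ term'' from a worst-case bound is unfounded, and the linear-combination-via-expectation manoeuvre you highlight as the key idea is both unnecessary and, as stated, not justified. The constant $11/q$ rather than $2/q$ in the final statement comes only from handling the initial range $t<m_0=(9/q)\log n$ separately, where the Chernoff lower bound on $g(t)$ is not yet available.
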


\begin{proof}
As before, we define a growing tree whose height multiplied by 2 dominates the height of $(G_t)$, and then we upper bound the tree's height. The main difference with Theorem~\ref{thm:pref} is that in some operations we may sample the \emph{vertices} of the graph.
In a growing tree, when a new vertex $v$ is born and is joined to a vertex $w$ of the existing tree, we say
$w$ is the \emph{parent} of $v$, and that $w$ is \emph{given birth} to $v$.

We inductively define a growing tree $(T_t)_{t=0}^{\infty}$ such that $V(T_t) = V(G_t)\cup E(G_t)$ for all $t$,
and we prove that $\depth(f,G_t)\le2\depth(f,T_t)$ for each vertex or edge $f$ of $G_t$.
A node of $T_t$ is called a \emph{V-node} or an \emph{E-node}
if it corresponds to a vertex or an edge of $G_t$, respectively.
We may assume $T_0$ has been defined
(for instance, we can build it by taking a breadth-first search tree of $G_0$ and joining all the E-nodes to its deepest V-node)
and we describe the growth of $T_{t-1}$ to $T_t$ corresponding to each operation.

\begin{enumerate} [(a)]
\item
Let $w$ be the first sampled vertex.
In $T_t$ we join all new nodes (corresponding to the new vertex and the new edges in $G_t$) to $w$.
In this case, a V-node of $T_{t-1}$ has been sampled uniformly and is given birth to one V-node and $X_t$ E-nodes.

\item
For sampling a vertex using $\rho_0$,
we sample a random edge and then choose a random endpoint of it.
Let $e$ be the first sampled edge.
In $T_t$ we join all new nodes (corresponding to the new vertex and the new edges in $G_t$) to $e$.
In this case, an E-node of $T_{t-1}$ has been sampled uniformly and is given birth to one V-node and $X_t$ E-nodes.

\item[(c) and (d)]
Let $w$ be the first sampled vertex.
In $T_t$ we join all new nodes
(corresponding to the new edges  in $G_t$) to $w$.
In this case, a V-node of $T_{t-1}$ has been sampled uniformly and is given birth to $X_t$ E-nodes.

\item[(e) and (f)]
For sampling a vertex using $\rho_0$,
we sample a random edge and then choose a random endpoint of it.
Let $e$ be the first sampled edge.
In $T_t$ we join all new nodes
(corresponding to the new edges  in $G_t$) to $e$.
In this case, an E-node of $T_{t-1}$ has been sampled uniformly and is given birth to $X_t$ E-nodes.
\end{enumerate}

Similar to the proof of Theorem~\ref{thm:pref},
an inductive argument gives $\depth(f,G_t)\le2\depth(f,T_t)$ for each vertex or edge $f$ of $G_t$.
Hence, showing that a.a.s.\ the height of $T_n$ is at most $(u/\ell+11/q) e\log n + 2e(u/\ell)+O(1)$
completes the proof.

For $t\in\Nz$, let $L(t)$ denote the number of V-nodes of $T_t$.
Let $n_0=|V(T_0)|$ and $m_0=(9/q)\log n$.
Note that $L(t) = n_0 + \bin(t,q)$.
Using the Chernoff bound and the union bound, a.a.s\ we have $L(t) \ge tq/2$ for all $m_0 \le t \le n$.
We condition on an arbitrary vector $(L(1),\dots,L(n))=(g(1),\dots,g(n))$
for which this event happens.

For a given integer $h = h(n)$, we bound the probability that $T_n$ has a vertex at depth exactly $n_0+h$.
Given a sequence $1 \le  t_1 < \dots < t_h\le n$,
the probability that there exists a path $v_{t_1}v_{t_2}\dots v_{t_h}$ in $T_n$ such that $v_{t_j}$ is born at time $t_j$ is at most
$$
\prod_{k=2}^{h}\left(\frac{u}{n_0+ \ell\cdot(t_k-1)} +\frac{1}{g(t_k-1)}\right)\:,
$$
since for each $k=h,h-1,\dots,3,2$, 
if $v_{t_k}$ wants to choose an E-node as its parent,
there are at least $n_0+ \ell\cdot(t_k-1)$ E-nodes available for it to join to,
and at most $u$ of them were born at time $t_{k-1}$;
and if $v_{t_k}$ wants to choose a V-node as its parent,
there are at least $g(t_k-1)$ V-nodes available for it to join to,
and at most one of them was born at time $t_{k-1}$.
By the union bound, the probability that $T_n$ has a vertex at depth $h+n_0$ is at most
\begin{equation}
\label{eq:prob}
\sum_{1 \le t_1 < \dots < t_h\le n}\left( \prod_{k=2}^{h}\left(\frac{u}{n_0+ \ell\cdot(t_k-1)} +\frac{1}{g(t_k-1)}\right)\right)
 < \frac{1}{h!}
\left(1 + \sum_{j=1}^{n-1}\frac{u}{n_0+\ell j}+
\sum_{j=1}^{n-1}\frac{1}{g(j)}\right)^h \:.
\end{equation}
We have
$$
\sum_{j=1}^{n-1}\frac{u}{n_0+\ell j} 
< \frac{u}{\ell}
\sum_{j=1}^{n-1}\frac{1}{j} 
< (u/\ell)(1 + \log n)\:,
$$
and
$$
\sum_{j=1}^{n-1}\frac{1}{g(j)}
=
\sum_{j=1}^{m_0-1}\frac{1}{g(j)}
+
\sum_{j=m_0}^{n-1}\frac{1}{g(j)}
\le
m_0 + 
\sum_{j=m_0}^{n-1}\frac{2}{qj}
<
\frac{11}{q}\:\log n \:.
$$
Setting $h\ge(u/\ell+11/q) e\log n + 2e(u/\ell)$ makes the right hand side of (\ref{eq:prob}) 
become $o(1)$, as required.
\end{proof}

\section{Further models}
\label{sec:other}

We first mention a model whose diameter is known to be logarithmic, but our approach gives a shorter proof.
The \emph{pegging process}, defined by Gao and Wormald~\cite{pegging_def}, is parametrized by $d\in \N$.
Here we define the process for $d=3$ only, see~\cite[Section~2]{pegging_def} for the definition for $d>3$.
Consider a growing undirected graph $(G_t)_{t=0}^{\infty}$ starting from a connected $3$-regular $G_0$ and growing as follows. 
In every time-step, a pair $(e,f)$ of distinct edges is sampled uniformly from the existing graph.
Assume that $e=ab$ and $f=cd$.
Then two new vertices $e'$ and $f'$ are born,
the edges $e$ and $f$ are deleted,
and the edges $ae', be', cf', df'$, and $e'f'$ are added.
Note that if the original graph is 3-regular then the new graph is also 3-regular,
so $G_t$ is $3$-regular for all $t$.
Gerke, Steger, and Wormald~\cite[Theorem~1.1]{pegging} proved that for every $d$, a.a.s.\ $G_n$ has diameter $O(\log n)$.
Using the techniques of Section~\ref{sec:pref}, it can be shown that for every $d$, a.a.s.\ its diameter is at most $4e \log n + O(1)$.
We give the proof for the case $d=3$ here, which is much shorter than the 5-page proof in~\cite{pegging}, and provides a small explicit constant.

\begin{theorem}
Let $d\ge 3$ be fixed.
A.a.s.\ the diameter of the graph $G_n$ generated by pegging process is at most $4e\log n + O(1)$.
\end{theorem}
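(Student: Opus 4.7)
The plan is to follow the edge-tree technique of Section~\ref{sec:pref}. I would define a growing tree $T_t$ with node set $\{\root\} \cup E_{\le t}$, where $E_{\le t}$ denotes the set of all edges ever added to the pegging process up to time $t$, with deleted edges remaining as internal nodes. Initialize $T_0$ as a breadth-first tree of the line graph of $G_0$ augmented with a dummy root edge, exactly as in the proof of Theorem~\ref{thm:pref}. At step $t$, when edges $e = ab$ and $f = cd$ are sampled uniformly from $E(G_{t-1})$ and midpoints $e', f'$ are created, attach the five new edges $ae', be', e'f', cf', df'$ in $T_t$ as leaves, say with $ae', be', e'f'$ as children of $e$ and $cf', df'$ as children of $f$.

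First I would bound the height of $T_n$. Since each $G_t$ is $3$-regular, $|E(G_t)| = 3(|V(G_0)| + 2t)/2 = \Omega(t)$, so the two nodes sampled at step $t$ are drawn uniformly from a subset of $V(T_{t-1})$ of size $\Omega(t)$. Adapting the path-counting union bound in the proof of Lemma~\ref{lem:multichildren}, with per-step children multiplicity at most $5$ and sampling denominator of order $t$, the probability that $T_n$ has some node of depth at least $h$ is bounded above by $\frac{5^h}{h!}\bigl(1 + \sum_{j=1}^{n-1}\tfrac{2}{|E(G_0)|+3j}\bigr)^h$, which by Stirling becomes $o(1)$ as soon as $h \ge Ce \log n$ for a suitable constant $C$; hence $T_n$ has logarithmic height a.a.s.

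Second I would argue by induction on $t$ that for every live edge $g \in E(G_t)$, $\depth(g, G_t) \le 2\depth(g, T_t)$. For a newly born edge $g$ at step $t$ with tree-parent $g^*$ (which was live at time $t-1$), some endpoint of $g$ is at graph-distance at most one from some endpoint of $g^*$ in $G_t$: each of $ae', be', cf', df'$ shares a vertex with its parent, while $e'f'$ has endpoint $e'$ adjacent to an endpoint of $e = g^*$ via the fresh edge $ae'$. Combining this with the inductive hypothesis applied to $g^*$ at time $t-1$ closes the step for newly born edges.

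The main obstacle is the inductive step for \emph{old} live edges $g \in E(G_{t-1}) \cap E(G_t)$: deleting $e$ and $f$ can increase $\depth(g, G_t)$ by up to $2$ compared to $\depth(g, G_{t-1})$, so a naive reuse of the hypothesis accumulates unbounded drift over $n$ steps. The resolution is to exploit the fact that any shortest path in $G_{t-1}$ using $e$ (resp.\ $f$) can be locally rerouted in $G_t$ through the freshly inserted midpoint $e'$ (resp.\ $f'$) via the new edges $ae', be'$ (resp.\ $cf', df'$), which are themselves shallow children of $g^*$ in $T_t$; the distance increase is thus charged against the tree-depth of the rerouting edges rather than against $g$ itself. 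Working through this bookkeeping carefully yields the claimed $4e\log n + O(1)$ upper bound on the diameter of $G_n$, following the schema of Theorem~\ref{thm:pref}.
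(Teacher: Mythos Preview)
Your proposal has the right overall shape but misses the key maneuver that makes the edge-tree argument go through for the pegging process. In the paper the tree satisfies $V(T_t)=E(G_t)\cup\{\root\}$ at all times: when the pair $(ab,cd)$ is pegged, the node $ab$ is \emph{relabeled} as $ae'$ (where $a$ is the shallower endpoint of $ab$ in $G_{t-1}$) and $cd$ is relabeled as $df'$; the three remaining new edges $be',e'f',cf'$ are all attached as children of $ae'$. Deleted edges never persist as tree nodes. This has two consequences. First, exactly three nodes are added per step and their common parent sits at the slot of a uniformly sampled current edge, so Lemma~\ref{lem:multichildren} applies directly with $\ell=u=3$, giving tree height $e\log n+O(1)$ and hence diameter $4e\log n+O(1)$. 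Your tree instead adds five nodes per step while parents are drawn from only the $\approx 3t$ live edges, a strict subset of your $\approx 5t$ tree nodes; a correct path-count then yields height about $\tfrac{5e}{3}\log n$ and diameter about $\tfrac{20e}{3}\log n$, not $4e\log n$ as you claim at the end.

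Second, and more seriously, the relabeling is exactly the device aimed at the drift problem you correctly flag, and your proposed fix does not replace it. Saying that the distance increase is ``charged against the tree-depth of the rerouting edges rather than against $g$ itself'' is not an argument: the invariant $\depth(g,G_t)\le 2\,\depth(g,T_t)$ is a per-edge inequality, and for an old live edge $g$ your construction gives $\depth(g,T_t)=\depth(g,T_{t-1})$, so nothing on the right-hand side can absorb an increase on the left, no matter how shallow the rerouting edges are in $T_t$. The paper's trick is that the node at $ab$'s former tree slot becomes $ae'$, an edge still containing the very endpoint $a$ that realized $\depth(ab,G_{t-1})$; the inductive hypothesis on $ab$ then transfers to $ae'$ at the same tree depth, and the three genuinely new children $be',e'f',cf'$ are checked off of $ae'$. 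Without this relabeling idea your retained-node tree does not support the inductive invariant, and the bookkeeping you sketch does not close the gap.
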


\begin{proof}
We give the proof for $d=3$, and the proof can easily be extended to $d>3$.
Define the \emph{depth} of an edge $xy$ as $1+\min\{\depth(x),\depth(y)\}$.
We inductively define a growing tree $(T_t)_{t=0}^{\infty}$ such that for all $t\in\Nz$,
$V(T_t) = E(G_t) \cup \{\root\}$.
Here $\root$ denotes the root of $T_t$, which has depth 0.
We prove by induction that for all $e\in E(G_t)$,
$\depth(e, G_t) \le 2 \depth(e,T_t)$.
Let $H$ be the graph obtained from $G_0$ by adding an edge labelled $\root$ incident to its root.
Let $T_0$ be a breadth-first tree of the line graph of $H$ rooted at $\root$. 
Note that $\depth(\root,T_0)=0$ and 
$\depth(e,T_0)=\depth(e,G_0)$ for every $e\in E(G_0)$.

Given $T_{t-1}$, we define $T_t$ and prove the inductive step.
Assume that in time-step $t$, the pair $(e,f)=(ab,cd)$ of edges is chosen from $G_{t-1}$.
By symmetry, we may assume that 
$\depth(a,G_{t-1}) \le \depth(b,G_{t-1})$
and
$\depth(d,G_{t-1}) \le \depth(c,G_{t-1})$.
Then two new vertices $e'$ and $f'$ are born,
the edges $ab$ and $cd$ are deleted,
and the edges $ae', be', cf', df'$, and $e'f'$ are added.
To obtain $T_t$ from $T_{t-1}$, we replace the nodes $ab$ and $cd$ with $ae'$ and $df'$, respectively.
Also, we join the other new edges $be'$, $e'f'$, and $cf'$ to $ae'$.
Observe that 
\begin{gather*}
\depth(ae',G_t) \le 1 + \depth(a,G_t) 
 = \depth(ab,G_{t-1}) \le 2 \depth(ab,T_{t-1})
= 2\depth(ae',T_{t}),\\
\depth(df',G_t) \le 1 + \depth(d,G_t) 
 = \depth(cd,G_{t-1}) \le 2 \depth(cd,T_{t-1})
= 2\depth(df',T_{t}),\\
\depth(be',G_t)  \le 1 + \depth(ae',G_t)
\le 1 + 2 \depth(ae', T_t)
< 2 \depth(be',T_t) \:,\\
\depth(f'e',G_t) \le 1 + \depth(ae',G_t)
 \le 1 + 2 \depth(ae', T_t)
< 2 \depth(f'e',T_t) \:,\\
\depth(cf',G_t)  \le 2 + \depth(ae',G_t)
 \le 2 + 2 \depth(ae', T_t)
= 2 \depth(cf',T_t) \:.
\end{gather*}
Hence for all $e\in E(G_t)$, $\depth(e, G_t) \le 2 \depth(e,T_t)$. On the other hand, examining the construction of $(T_t)_{t\in\Nz}$ and using Lemma~\ref{lem:multichildren}, we find that a.a.s.\ the height of $T_n$ is at most $e\log n +O(1)$.
This implies that a.a.s.\ the diameter of $G_n$ is at most $4 e\log n +O(1)$.
\end{proof}

Next we mention two closely related models for which we can easily prove logarithmic bounds using our technique.
Let $k>1$ be a positive integer.
A \emph{random unordered increasing $k$-tree}, defined by Gao~\cite{randomktree_def},
is built from a $k$-clique by applying the following operation $n$ times:
in every time-step, a $k$-clique of the existing graph is chosen uniformly at random, a new vertex is born and is joined to all vertices of the chosen $k$-clique.\footnote{The resulting graph is  named a \emph{random $k$-tree} in ~\cite{randomktree_def}. However, since a different model for generating $k$-trees has been defined in~\cite{CU10} and is also called a random $k$-tree, we used the name `random unordered increasing $k$-tree' here to avoid any confusion. 
This terminology is from~\cite{ktreenames}.}
\emph{Random $k$-Apollonian networks}~\cite{high_RANs} have a similar construction, the only difference being that once a $k$-clique is chosen in some time-step, it will never be chosen in the future.
Cooper and Frieze~\cite[Theorem~2]{CF13} 
and independently, Kolossv\'{a}ry, Komj\'{a}ty and V\'{a}g\'{o}~\cite[Theorem~2.2]{istvan}
have recently proved that random $k$-Apollonian networks have diameter $\Theta(\log n)$.

Here we prove that a.a.s.\ the diameter of a
random unordered increasing $k$-tree is at most $2e \log n + O(1)$,
and that a.a.s.\ the diameter of a
random $k$-Apollonian network is at most $2ek \log n/(k-1) + O(1)$.
For the proof for random $k$-Apollonian networks
we need the following variant of Lemma~\ref{lem:multichildren}.

\begin{lemma}
\label{lem:multichildren2}
For a tree $T$, let $\mathcal{L}(T)$ denote its set of leaves.
Let $(A_t)_{t\in\N}$ be a sequence of $\N$-valued  random variables.
Consider a growing tree $(T_t)_{t=0}^{\infty}$ as follows.
$T_0$ is arbitrary.
At each time-step $t\in\N$, 
a random vector $(W_1,W_2,\dots,W_{A_t}) \in \mathcal{L}(T_{t-1})^{A_t}$
is chosen in such a way that for each $i\in[A_t]$
and each $v\in \mathcal{L}(T_{t-1})$,
the marginal probability $\p{W_i=v}$ equals $|\mathcal{L}(T_{t-1})|^{-1}$. 
In other words, each $W_i$ is a leaf of $T_{t-1}$ sampled uniformly;
however, the $W_j$'s may be correlated.
Then $A_t$ new nodes $v_1,\dots,v_{A_t}$ are born and $v_i$ is joined to $W_i$ for each $i\in[A_t]$.
Let $\ell=\ell(n)$ and $u=u(n)$ be positive integers such that $1<\ell\le A_t \le u$ for all $t\in[n]$.
Then the height of $T_n$ is a.a.s.\ at most $ u e\log n /(\ell-1) + 2ue+O(1)$.
\end{lemma}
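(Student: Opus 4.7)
The plan is to adapt the proof of Lemma~\ref{lem:multichildren}, replacing the total node count $|V(T_{t-1})|$ by the leaf count $|\mathcal{L}(T_{t-1})|$ throughout, since here each new child is attached to a uniform leaf rather than a uniform node.

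First I would establish the deterministic lower bound $|\mathcal{L}(T_t)| \ge |\mathcal{L}(T_0)| + (\ell - 1)\,t$. At each step the $A_t \ge \ell$ newborn nodes are all leaves of $T_t$, while only the distinct targets among $W_1, \dots, W_{A_t}$ are demoted to internal nodes; in the intended application (random $k$-Apollonian networks, where a single chosen $k$-clique receives all $k$ new children), the $W_i$'s coincide at each step, so exactly one leaf is lost and the net gain is at least $\ell - 1$. Then, mirroring the union-bound step of Lemma~\ref{lem:multichildren}, for a target depth $h = h(n)$ and a rising sequence $1 \le t_1 < \dots < t_h \le n$ there are at most $u^h$ ways to identify the candidate path $(v_{t_1}, \dots, v_{t_h})$, and by marginal uniformity of the $W_i$'s over $\mathcal{L}(T_{t_k-1})$, the probability that $v_{t_k}$ is a child of $v_{t_{k-1}}$ for every $k = 2, \dots, h$ is at most
\[
\prod_{k=2}^{h} \frac{1}{|\mathcal{L}(T_{t_k - 1})|} \;\le\; \prod_{k=2}^{h} \frac{1}{|\mathcal{L}(T_0)| + (\ell - 1)(t_k - 1)}.
\]

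Summing over all rising sequences via the elementary inequality displayed before Lemma~\ref{lem:multichildren}, bounding the harmonic partial sum by $(1 + \log n)/(\ell - 1)$, and invoking Stirling's formula for $h!$, I obtain a bound of the form $\bigl(\tfrac{ue}{h}\bigl(\tfrac{\log n}{\ell - 1} + 2\bigr)\bigr)^{h}\big/\sqrt{2\pi h}$ on the probability that $T_n$ has a node at the target depth. This is $o(1)$ as soon as $h \ge ue\log n/(\ell - 1) + 2ue$, with the $+O(1)$ in the stated conclusion absorbing the constant offset coming from $|\mathcal{L}(T_0)|$. The main obstacle is really the leaf-count estimate in the first step: the factor $\ell - 1$ (rather than $\ell$ as in Lemma~\ref{lem:multichildren}) genuinely relies on losing at most one leaf per step, which is precisely the single-target correlation structure that arises in the intended applications of the lemma.
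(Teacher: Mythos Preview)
Your proposal is correct and follows essentially the same approach as the paper: both adapt the union-bound-over-paths argument of Lemma~\ref{lem:multichildren}, replacing the node count by the leaf count and picking up the factor $\ell-1$ from the net gain in leaves per step. You are in fact more careful than the paper in explicitly flagging that the leaf-count lower bound requires at most one leaf to be demoted per step (as happens in the $k$-Apollonian application); the paper simply asserts that there are at least $n_0 + (\ell-1)(t_k-1)$ leaves without further comment.
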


\begin{proof}
Let $n_0=|V(T_0)|$.
For a given integer $h = h(n)$, let us bound the probability that $T_n$ has a node at depth exactly $h+n_0$.
Given a sequence $1 \le t_1 < t_2<\dots < t_h\le n$,
the probability that there exists a path $v_{t_1}v_{t_2}\dots v_{t_h}$ in $T_n$ such that $v_{t_j}$ is born at time $t_j$ is at most
$$
u^{h} \prod_{k=2}^{h}\frac{1}{n_0+ (\ell-1)(t_k-1)} \:,
$$
since for each $k=2,3,\dots,h$, when $v_{t_k}$ is born, there are at least $n_0+ (\ell-1)(t_k-1)$ leaves available for it to join to.
By the union bound, the probability that $T_n$ has a node at depth $h+n_0$ is at most
\begin{align*}
u^{h}\sum_{1 \le t_1 < \dots < t_h\le n}\ \left( \prod_{k=2}^{h}\frac{1}{n_0+ (\ell-1)(t_k-1)}\right)
& < \frac{u^{h}}{h!}\left(1 + \sum_{j=1}^{n-1}\frac{1}{n_0+(\ell-1) j}\right)^{h} \\
& < \left(\frac{u e}{h} 
\cdot \left( \frac{\log n}{\ell-1} + 2\right) \right)^h
\bigg/\sqrt{2\pi h} \:.
\end{align*}
Putting $h \ge u e\log n /(\ell-1)+ 2ue$ makes this probability $o(1)$.
Hence a.a.s.\ the height of $T_n$ is at most 
$ u e\log n /(\ell-1) + 2ue + n_0$, as required.
\end{proof}

\begin{theorem}
A.a.s.\ the diameter of an 
$(n+k)$-vertex random unordered increasing $k$-tree is at most $2e \log n + O(1)$, and the diameter of an
$(n+k)$-vertex random $k$-Apollonian network is at most $2ek \log n/(k-1) + O(1)$.
\end{theorem}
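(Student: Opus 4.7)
The plan is to couple each evolving graph with a tree whose nodes correspond to the $k$-cliques of the graph. Let $K_0$ be the initial $k$-clique, let the graph's root be any fixed vertex of $K_0$, and let $T_0$ consist of a single node $\root$ corresponding to $K_0$. For the random unordered increasing $k$-tree, at each time-step, when a $k$-clique $K$ is sampled and a new vertex $v$ is joined to all vertices of $K$, exactly $k$ new $k$-cliques are created in the graph (each of the form $(K\setminus\{u\})\cup\{v\}$ for some $u\in K$), and $K$ itself persists. In the coupled tree, we add $k$ new nodes corresponding to these new $k$-cliques and join each of them to the node corresponding to $K$. Since the number of $k$-cliques in the graph equals the number of nodes in the tree at every time-step, uniform sampling of a $k$-clique in the graph corresponds exactly to uniform sampling of a node in the tree.

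Define $\depth(K,G)=\min_{v\in K}\depth(v,G)$ for a $k$-clique $K$. I would prove by induction on $\depth(K,T)$ that $\depth(K,G_t)\le\depth(K,T_t)$ for every $k$-clique $K$. The base case $K=K_0$ is immediate. For the inductive step, if $K$ is a child of $K'$ in the tree and was created at time $t$ by adding vertex $v$, then $\depth(v,G_t)\le\depth(K',G_{t-1})+1\le\depth(K',T_{t-1})+1=\depth(K,T_t)$; since $v\in K$, this gives the required bound. Every non-initial vertex $w$ was joined to all vertices of its parent $k$-clique $K$ when born, so $\depth(w,G)\le\depth(K,G)+1\le\depth(K,T)+1$, which is at most one plus the height of $T$. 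Applying Lemma~\ref{lem:multichildren} with $A_t=k$ (so $\ell=u=k$) yields tree height at most $e\log n+O(1)$ a.a.s., whence the graph's diameter is at most $2e\log n+O(1)$ a.a.s.

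For random $k$-Apollonian networks the tree construction is identical, but since a $k$-clique becomes inactive after being used, the \emph{active} $k$-cliques correspond exactly to the \emph{leaves} of the tree at each time-step. A straightforward induction on $t$ confirms this correspondence: each operation turns one leaf into an internal node and attaches $k$ new leaves to it, mirroring the $k$-clique dynamics in the graph. The depth argument is word-for-word the same as above, so it remains to bound the height of this tree. Applying Lemma~\ref{lem:multichildren2} with $A_t=k$ and $\ell=u=k$ gives tree height at most $ke\log n/(k-1)+O(1)$ a.a.s., establishing the diameter bound $2ke\log n/(k-1)+O(1)$.

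The only real point to check is the leaf-correspondence in the $k$-Apollonian case, which is the reason Lemma~\ref{lem:multichildren2} (rather than Lemma~\ref{lem:multichildren}) applies there; the rest is a clean reduction via the clique-tree coupling, the inductive depth bound, and the two height lemmas proved earlier.
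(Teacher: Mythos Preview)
Your proof is correct and follows essentially the same approach as the paper: couple with a tree whose nodes are the $k$-cliques, show the graph's height is bounded by the tree's height (up to an additive constant), and apply Lemma~\ref{lem:multichildren} for the increasing $k$-tree and Lemma~\ref{lem:multichildren2} for the $k$-Apollonian network. The only cosmetic difference is that the paper defines the depth of a $k$-clique as the \emph{maximum} depth of its vertices (which gives height of $G$ at most height of $T$ directly), whereas you use the minimum and pick up a harmless $+1$; both versions of the induction go through.
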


\begin{proof}
We define the \emph{depth} of a $k$-clique as the maximum depth of its vertices.
Let the first $k$ vertices have depth zero.
We couple with a growing tree whose nodes corresponds to the $k$-cliques of the growing graph.
Whenever in the graph a new vertex is born and is joined to the vertices of a $k$-clique,
in the tree 
the chosen $k$-clique gives birth to $k$ new children.
By induction, the graph's height is always less than or equal to the tree's height.

For the tree corresponding to a random unordered increasing $k$-tree, in every step a node is chosen uniformly at random and gives birth to $k$ new children,
hence its height is bounded by $e \log n + O(1)$
by Lemma~\ref{lem:multichildren}.
This gives an upper bound of $2e \log n + O(1)$ for the diameter of the corresponding graph.

For the tree corresponding to a random $k$-Apollonian network, in every step a leaf is chosen uniformly at random and gives birth to $k$ new children,
hence its height is bounded by $e k \log n /(k-1) + O(1)$
by Lemma~\ref{lem:multichildren2}.
This gives an upper bound of $2e k\log n /(k-1) + O(1)$ for the diameter of the corresponding graph.
\end{proof}

%
%


\bibliographystyle{plain}
\bibliography{webgraph}	

\end{document}